\documentclass[11pt]{article}
\usepackage{dsfont}
\usepackage{amsmath}
\usepackage{amsthm}
\usepackage{amssymb}
\usepackage{algorithm,algorithmic}
\usepackage{graphicx}
\usepackage{url}
\usepackage{hyperref}
\usepackage[letterpaper,margin=1in]{geometry}
\usepackage{authblk}
\usepackage{tikz} 
\usepackage{subcaption}
\usepackage{slashbox}

\usepackage{booktabs}
\usepackage{multirow}

\usetikzlibrary{arrows,decorations.markings}

\usetikzlibrary{shapes.misc}
\usetikzlibrary{calc}

\tikzset{cross/.style={cross out, draw=black, minimum size=2*(#1-\pgflinewidth), inner sep=0pt, outer sep=0pt},
cross/.default={1pt}}

\usepackage{color}

\definecolor{blueblack}{rgb}{0,0,.7}

\newcounter{sideremark}

\theoremstyle{plain}
\newtheorem{theorem}{Theorem}[section]
\newtheorem{lemma}[theorem]{Lemma}

\newtheorem{remark}[theorem]{Remark}

\newtheorem{corollary}[theorem]{Corollary}

\theoremstyle{remark}

\def\eg{{\it e.g.,}~}
\def\ie{{\it i.e.,}~}

\makeatletter
\newtheorem*{rep@theorem}{\rep@title}
\newcommand{\newreptheorem}[2]{%
\newenvironment{rep#1}[1]{%
 \def\rep@title{\emph{\textbf{#2} \ref{##1}}}%
 \begin{rep@theorem}}%
 {\end{rep@theorem}}}
\makeatother
\newreptheorem{theorem}{Theorem}
\newreptheorem{lemma}{Lemma}
\newreptheorem{proposition}{Proposition}
\usepackage{thmtools,thm-restate}

\newif\ifshort
\shorttrue 


\newcommand{\Prob}[1]{{\mathbf{Pr}\left[ #1 \right]}}
\newcommand{\Ex}[1]{\ensuremath{\mathbf{E}\left[#1\right]}}
\newcommand{\Var}[1]{\ensuremath{\mathbf{Var}\left[#1\right]}}

\DeclareMathOperator{\SIM}{sim}

\DeclareMathOperator{\lsb}{lsb}

\newcommand{\bset}{\lbrace 0,1 \rbrace}

\begin{document}
%
\title{Similarity Search for Dynamic Data Streams}
\date{}
%
%
%
%


\author{Marc~Bury, 
Chris Schwiegelshohn,~Mara~Sorella
\thanks{This work was supported by ERC Advanced Grant 788893 AMDROMA.}
}

%
%

\markboth{Journal of Transactions on Knowledge and Data Engineering}{}
%




\maketitle

\begin{abstract}
Nearest neighbor searching systems are an integral part of many online applications, including but not limited to pattern recognition, plagiarism detection and recommender systems. 
With increasingly larger data sets, scalability has become an important issue. Many of the most space and running time efficient algorithms are based on locality sensitive hashing.
Here, we view the data set as an $n$ by $|U|$ matrix where each row corresponds to one of $n$ users and the columns correspond to items drawn from a universe $U$.
The de facto standard approach to quickly answer nearest neighbor queries on such a data set is usually a form of min-hashing.
Not only is min-hashing very fast, but it is also space efficient and can be implemented in many computational models aimed at dealing with large data sets such as MapReduce and streaming.
However, a significant drawback is that minhashing and related methods are only able to handle insertions to user profiles and tend to perform poorly when items may be removed.
We initiate the study of scalable locality sensitive hashing (LSH) for fully dynamic data-streams. Specifically, using the Jaccard index as similarity measure, we design (1) a collaborative filtering mechanism maintainable in dynamic data streams and (2) a sketching algorithm for similarity estimation.
Our algorithms have little overhead in terms of running time compared to previous LSH approaches for the insertion only case, and drastically outperform previous algorithms in case of deletions.
\end{abstract}




%

\section{Introduction}\label{sec:introduction}

%
%
%
%
Finding the most interesting pairs of points, i.e. typically those having small distance or, conversely, of high similarity, known as the \emph{nearest-neighbor search} problem, is a task of primary importance, that has many applications such as plagiarism detection~\cite{BGMZ97}, clustering~\cite{GRS00}, association rule mining~\cite{CDFGIMUY01}.
The aim is to maintain a data structure such that we can efficiently report all neighbors within a certain distance from a candidate point.
\emph{Collaborative filtering}~\cite{SLH14} is an approach to produce such an item set by basing the recommendation on the most similar users in the data set and suggesting items not contained in the intersection.
To apply such an approach, one typically requires two things: (1) a measure of similarity (or dissimilarity) between users and (2) scalable algorithms for evaluating these similarities.
In these contexts scalability can mean fast running times, but can also require strict space constraints.

Though it is by no means the only method employed in this line of research, locality senstive hashing (LSH) satisfies both requirements~\cite{CDFGIMUY01,DDGR07}. For a given similarity measure, the algorithm maintains a small number of hash-values, or fingerprints that represent user behavior in a succinct way. The name implies that the fingerprints have locality properties, i.e., similar users have a higher probability of sharing the same fingerprint whereas dissimilar users have a small probability of agreeing on a fingerprint.
The fingerprints themselves allow the recommendation system to quickly filter out user pairs with low similarity, leading to running times that are almost linear in input and output size.

A crucial property of LSH-families is that they are data-oblivious, that is the properties of the hash family depend only on the similarity measure but not on the data. Therefore, LSH-based filtering can be easily facilitated in online and streaming models of computation, where user attributes are added one by one in an arbitrary order. The fingerprint computation may fail, however, if certain attributes get deleted.
Attribute deletion occurs frequently, for instance, if the data set evolves over time. Amazon allows users to unmark certain bought items for recommendations, Twitter users have an unfollow option, Last.fm users may delete songs or artists from the library. A naive way to incorporate deletions within the context of LSH is to recompute any affected fingerprint, which requires scanning the entire user profile and is clearly infeasible.
\subsection{Contributions}
We initiate the study of locality sensitive nearest neighbors search in the dynamic data-stream model. Our input consists of sequence of triples $(i,j,k)$, where $i\in [n]$ is the user identifier, $j\in [|U|]$ is the item identifier and $k\in\{-1,1\}$ signifying insertion or deletion. Instead of maintaining an $n\times |U|$ user/attribute matrix, we keep a sketch of $\text{polylog}(n \cdot |U|)$ bits per user. 

In a first step, we show that the Jaccard distance $1-\frac{|A\cap B|}{|A\cup B|}$ can be $(1 \pm \varepsilon)$-approximated in dynamic streams. 
Moreover, the compression used in this approximation is a black-box application of $\ell_0$ sketches, which allows for extremely efficient algorithms in both theory and practice.
This also enables us to efficiently compress the $n$ by $n$ distance matrix using $n\text{ polylog }(n|U|)$ bits, similar in spirit to the compression by Indyk and Wagner~\cite{IndykW17} for Euclidean spaces. This enables us to run any distance matrix-based algorithm in a dynamic semi-streaming setting.

Known lower bounds on space complexity of set intersection prevent us from achieving a compression with multiplicative approximation ratio for Jaccard similarity, see for instance \cite{PSW14}.
From the multiplicative approximation for Jaccard distance we nevertheless get an $\varepsilon$-additive approximation to Jaccard similarity, which may be sufficient if the interesting similarities are assumed to exceed a given threshold.
However, even with this assumption, such a compression falls short of the efficiency we are aiming for, as it is not clear whether the relevant similarities can be found more quickly than by evaluating all similarities.

Our main contribution lies now in developing a compression scheme that simultaneously supports locality-sensitive hashing while satisfying a weaker form of approximation ratio.
The construction is inspired by bit-hashing techniques used both by $\ell_0$ sketches and min-hashing.
In addition, our approach can be extended to other similarities admitting LSHs other than min-hashing, such as Hamming, Anderberg, and Rogers-Tanimoto similarities.
This approach, despite having provable bounds that are weaker than $\ell_0$ sketches from an approximation point of view, is extremely simple to implement. Our implementation further showed that our algorithms have none to little overhead in terms of running time compared to previous LSH approaches for the insertion only case, and drastically outperform previous algorithms in case of deletions.


\section{Preliminaries}
\label{sec:prelim}

We have $n$ users. A user profile is a subset of some universe $U$. 
The symmetric difference of two sets $A,B\subseteq U$ is $A \bigtriangleup B = (A \setminus B) \cup (B \setminus A)$. The complement is denoted by $\overline{A} = U\setminus A$.
Given $x,y\geq 0$ and $0\leq z\leq z'$, the \emph{rational set similarity} $S_{x,y,z,z'}$ between two item sets $A$ and $B$ is
\[S_{x,y,z,z'}(A,B)=\frac{x\cdot \vert A\cap B \vert + y \cdot \vert \overline{A\cup B}\vert + z\cdot \vert A\bigtriangleup B\vert}{x\cdot \vert A\cap B \vert + y \cdot \vert \overline{A\cup B}\vert + z'\cdot \vert A\bigtriangleup B\vert}\]
if it is defined and $1$ otherwise.
The distance function induced by a similarity $S_{x,y,z,z'}$ is defined as $D_{x,y,z,z'}(A,B):=1-S_{x,y,z,z'}(A,B)$.
If $D_{x,y,z,z'}$ is a metric, we call $S_{x,y,z,z'}$ a metric rational set similarity~\cite{Jan06}. 
The arguably most well-known rational set similarity is the Jaccard index $S(A,B)=S_{1,0,0,1}(A,B)=\frac{|A\cap B|}{|A\cup B|}$.
A \emph{root similarity} is defined as $S_{x,y,z,z'}^{\alpha} := 1-(1-S_{x,y,z,z'})^{\alpha}$ for any $0<\alpha\leq 1$.
We denote numerator and denominator of a rational set similarity by $Num(A,B)$ and $Den(A,B)$, respectively.
For some arbitrary but fixed order of the elements, we represent $A$ via its characteristic vector $a \in \{0,1\}^{|U|}$ with $a_i = 1$ iff $i \in A$. The $\ell_p$-norm of a vector $a \in \mathbb{R}^d$ is defined as $\ell_p(a)=\sqrt[p]{\sum_{i=1}^d |a_i|^p}$. Taking the limit of $p$ to $0$, $\ell_0(x)$ is exactly the number of non-zero entries, i.e. $\ell_0(a) = \vert \lbrace i \mid a_i \neq 0 \rbrace \vert$.

An LSH for a similarity measure $S:U\times U\rightarrow [0,1]$ is a set of hash functions $H$ on $U$ with an associated probability distribution such that $\Prob{h(A)=h(B)} = S(A,B)$
for $h$ drawn from $H$ and any two item sets $A,B\subseteq U$.
We will state our results in a slightly different manner.
A $(r_1,r_2,p_1,p_2)$-sensitive hashing scheme for a similarity measure aims to find a distribution over a family of hash functions $H$ such that for $h$ drawn from $H$ and two item sets $A,B\subseteq U$ we have $\Prob{h(A)=h(B)} \geq p_1$ if $ S(A,B)\geq r_1$ and $\Prob{h(A)=h(B)} \leq p_2$ if $ S(A,B)\leq r_2$. 
The former definition due to Charikar~\cite{Cha02} has a number of appealing properties and is a special case of the latter definition due to Indyk and Motwani~\cite{InM98}.
Unfortunately, it is also a very strong condition and in fact not achievable for dynamic data streams.
We emphasize that the general notions behind both definitions are essentially the same.

\section{Related Work}
\label{sec:related}
\subsection*{Locality Sensitive Hashing} 
Locality sensitive hashing describes an algorithmic framework for fast (approximate) nearest neighbor search in metric spaces.
In the seminal paper by Indyk and Motwani~\cite{InM98}, it was proposed as a way of coping with the \emph{curse of dimensionality} for proximity problems in high-dimensional Euclidean spaces.
The later, simpler definition by Charikar~\cite{Cha02} was used even earlier in the context of min-hashing for the Jaccard index by Broder et al.~\cite{Bro97,Bro00,BCFM00}. 
Roughly speaking, min-hashing computes a fingerprint of a binary vector by permuting the entries and storing the first non-zero entry. 
For two item sets $A$ and $B$, the probability that the fingerprint is identical is equal to the Jaccard similarity of $A$ and $B$. 
When looking for item sets similar to some set $A$, one can arrange multiple fingerprints to filter out sets of small similarity while retaining sets of high similarity, see Cohen et al.~\cite{CDFGIMUY01}, and Leskovec et al.~\cite{LRU14} for details.
We note that while this paper is focused mainly on min-hashing, locality sensitive hashing has been applied to many different metrics, see Andoni and Indyk~\cite{AnI08} for an overview.

Instead of using multiple independent hash functions to generate $k$ fingerprints, Cohen and Kaplan suggested using the $k$ smallest entries after a single evaluation~\cite{CoK07,CoK07b} which is known as bottom $k$-sampling, see also Hellerstein et al.~\cite{HHW97}.
Min-hashing itself is still an active area of research. Broder et al.~\cite{BCFM00} showed that an ideal min-hash family is infeasible to store, which initiated the search for more feasible alternatives. Indyk considered families of approximate min-wise indepedence~\cite{Ind01}, i.e. the probability of an item becoming the minimum is not uniform, but close to uniform, see also Feigenblat et al~\cite{FPS11}.

Instead of basing requirements on the hash-function, other papers focus on what guarantees are achievable by simpler, easily stored and evaluated hash-functions with limited independence.
Of particular interest are 2-wise independent hash functions.
Dietzfelbinger~\cite{Die96} showed that, given two random numbers $a$ and $b$, the hash of $x$ may be computed via $(ax+b)\gg k$, where $k$ is a power of $2$ and $\gg$ denotes a bit shift operation.
This construction is to the best of our knowledge the fastest available and there exists theoretical evidence which supports that it may be, in many cases, good enough.
Chung et al.~\cite{CMV13} showed that if the entropy of the input is large enough, the bias incurred by $2$-wise independent hash functions becomes negligible.
Thorup~\cite{Tho13} further showed that $2$-wise independent hashing may be used for a bottom $k$ sampling with a relative error of $\frac{1}{\sqrt{fk}}$, where $f$ is the Jaccard similarity between two items.
Better bounds and/or running times are possible using more involved hash functions such as tabulation hashing~\cite{PaT13,Tho13b}, linear probing~\cite{PPR11,PaT16}, one-permuatation hashing~\cite{DKT17a,LOZ12,SL14}, and feature hashing~\cite{DKT17b,WDLSA09}. 

\subsection*{Profile Sketching}
Using the index of an item as a fingerprint is immediate and requires $\log |U|$ space.
For two item sets $A,B$, we then require roughly $\frac{\log |U|}{\varepsilon^2\cdot S(A,B)}$ bits of space to get an $(1\pm \varepsilon)$-approximate estimate of the similarity $S(A,B)$.
It turns out that this is not optimal,
Bachrach and Porat~\cite{BP15,BaP13} and Li and K\"onig~\cite{LiK11} proposed several improvements and constant size fingerprints are now known to exist.
Complementing these upper bounds are lower bounds by Pagh~\cite{PSW14} who showed that that this is essentially optimal for summarizing Jaccard similarity.

We note that one of the algorithms proposed by Bachrach and Porat in~\cite{BP15} use an $\ell_2$ estimation algorithm as a black box to achieve fingerprint size of size $\frac{(1-S(A,B))^2}{\varepsilon^2}\log |U|$ bits. It is well known that the $\ell_2$ norm of a vector can be maintained in dynamic data streams. However, their algorithm only seems to work for if the similarity is sufficiently large, i.e. $S(A,B)\geq 0.5$ and it does not seem to support locality sensitive hashing.

\section{Similarity sketching in dynamic Streams}
\label{sec:est}
In this section, we aim to show that the distance function of any rational set similarity with an LSH can be $(1\pm \epsilon)$-approximated in dynamic streams. First, we recall the following theorem relating LSH-ability and properties of the induced dissimilarity:
\begin{theorem}
\label{thm:characterization}
Let $x,y,z,z'>0$. Then the following three statements are equivalent.
\begin{enumerate}
\item $S_{x,y,z,z'}$ has an LSH.
\item $1-S_{x,y,z,z'}$ is a metric.
\item $z'\geq\max (x,y,z)$.
\end{enumerate}
\end{theorem}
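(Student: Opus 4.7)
The plan is to prove the three-way equivalence by establishing the cycle $(1)\Rightarrow(2)\Rightarrow(3)\Rightarrow(1)$. The first implication is standard, the second reduces to a computation on a well-chosen triple, and the third is the genuinely difficult direction where the full strength of the condition $z'\geq\max(x,y,z)$ is used to build an explicit hashing family.

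For $(1)\Rightarrow(2)$, I would use the observation (due to Charikar) that given an LSH family $H$ realising $\Prob{h(A)=h(B)}=S(A,B)$, the quantity $d(A,B):=1-S(A,B)=\Prob{h(A)\neq h(B)}$ is non-negative, symmetric, and vanishes on the diagonal. The triangle inequality follows from
\[
\Prob{h(A)\neq h(C)}\leq \Prob{h(A)\neq h(B)}+\Prob{h(B)\neq h(C)}
\]
by a union bound applied to the inclusion of events $\{h(A)\neq h(C)\}\subseteq\{h(A)\neq h(B)\}\cup\{h(B)\neq h(C)\}$.

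For $(2)\Rightarrow(3)$, I would extract each of the inequalities $z'\geq x$, $z'\geq y$, and $z'\geq z$ from a dedicated test triple. Taking $A=\{1\}$, $B=\{1,2\}$, $C=\{2\}$ in the universe $\{1,2\}$, the triangle inequality $1-S(A,C)\leq(1-S(A,B))+(1-S(B,C))$, after inserting the formula for $1-S_{x,y,z,z'}$ and dividing out the positive factor $z'-z$, collapses to $z'\geq x$. Complementing the three sets swaps the roles of intersection and co-union, and hence of $x$ and $y$, yielding $z'\geq y$. The remaining inequality $z'\geq z$ is forced by the non-negativity of $1-S$.

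The main obstacle is $(3)\Rightarrow(1)$: exhibiting an actual LSH whenever $z'\geq\max(x,y,z)$. My plan is to express $S_{x,y,z,z'}$ as a convex combination of elementary similarities that are already known to admit an LSH, such as Jaccard (via min-hash), Hamming, Anderberg, and Rogers--Tanimoto. Writing $S_{x,y,z,z'}=\sum_i\lambda_i S_i$ with $\lambda_i\geq 0$ and $\sum_i\lambda_i=1$, a composite LSH first samples an index $i$ with probability $\lambda_i$ and then draws a hash function from the family associated with $S_i$, producing collision probability exactly $S_{x,y,z,z'}(A,B)$. The technical crux is showing that such a non-negative decomposition exists precisely when $z'\geq\max(x,y,z)$; identifying a basis of elementary LSHable similarities whose conic hull covers the full parameter region cut out by statement $(3)$, and verifying the non-negativity of the resulting coefficients, is where essentially all the work lies.
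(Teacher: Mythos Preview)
The paper does not actually prove this theorem: immediately after the statement it writes ``$(1)\Rightarrow(2)$ was shown by Charikar, $(2)\Rightarrow(1)$ was shown by Chierichetti and Kumar, and $(2)\Leftrightarrow(3)$ was proven by Janssens'' and moves on. So there is no in-paper argument to match; you are attempting a self-contained proof where the authors simply cite the literature. Your $(1)\Rightarrow(2)$ is exactly Charikar's argument, and your $(2)\Rightarrow(3)$ via test triples is in the spirit of Janssens' computation (with the small caveat that complementing your triple in the two-element universe produces $\bar B=\emptyset$, so you should embed the configuration in a larger universe before taking complements).

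The genuine gap is in your plan for $(3)\Rightarrow(1)$. You propose to write $S_{x,y,z,z'}$ as a convex combination of Jaccard, Hamming, Anderberg, Rogers--Tanimoto, and then mix the corresponding hash families. Convex combinations of LSHable similarities are indeed LSHable, but the rational set similarities $S_{x,y,z,z'}$ do \emph{not} form a convex set, and in particular are not closed under such mixtures: each of your named similarities has a different denominator as a function of $(|A\cap B|,|\overline{A\cup B}|,|A\bigtriangleup B|)$, and a nontrivial convex combination of two such ratios is no longer a single ratio of the required shape. Concretely, $\tfrac12\,\frac{|A\cap B|}{|A\cup B|}+\tfrac12\,\frac{d-|A\bigtriangleup B|}{d}$ is not equal to $S_{x,y,z,z'}$ for any choice of parameters. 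Equivalently, the map $(x,y,z,z')\mapsto S_{x,y,z,z'}$ is not affine, so covering the parameter region $\{z'\geq\max(x,y,z)\}$ by a conic hull of parameter vectors says nothing about convex combinations of the associated similarities. The Chierichetti--Kumar construction that the paper cites does not proceed this way; it builds the hash family directly (via random replication/weighting of elements and tailored tie-breaking), and that machinery is precisely what your outline is missing.
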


(1)$\Rightarrow$(2) was shown by Charikar~\cite{Cha02}, (2)$\Rightarrow$(1) was shown by Chierichetti and Kumar~\cite{ChK15} and (2)$\Leftrightarrow$(3) was proven by Janssens~\cite{Jan06}. 
We also recall the state of the art of $\ell_0$ sketching in dynamic streams.
\begin{theorem}[Th. 10 of Kane, Nelson, and Woodruff~\cite{KNW10}]
\label{thm:knw}
There is a dynamic streaming algorithm for $(1\pm \varepsilon)$-approximating $\ell_0(x)$ of a $|U|$-dimensional vector $x$ using space $O(\frac{1}{\varepsilon^2}\log |U| )$\footnote{\footnotesize The exact space bounds of the $\ell_0$ sketch by Kane, Nelson and Woodruff depends on the magnitude of the entries of the vector. The stated space bound is sufficient for our purposes as we are processing binary entries.}, with probability $2/3$, and with $O(1)$ update and query time.
\end{theorem}

With this characterization, we prove the following.

\begin{theorem}
\label{thm:streamdist}
Given a constant $0<\varepsilon\leq 0.5$, two item sets $A,B\subseteq U$ and some rational set similarity $S_{x,y,z,z'}$ with metric distance function $1-S_{x,y,z,z'}$, there exists a dynamic streaming algorithm that maintains a $(1\pm \varepsilon)$ approximation to $1-S_{x,y,z,z'}(A,B)$ with constant probability. The algorithm uses $O(\frac{1}{\varepsilon^2}\log \lvert U \rvert)$ space and has $O(1)$ update and query time.
\end{theorem}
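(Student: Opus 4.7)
The plan is to reduce the problem to maintaining an $\ell_0$ sketch for each characteristic vector, so that $1-S_{x,y,z,z'}$ can be recovered from the three quantities $|A|$, $|B|$, and $|A\bigtriangleup B|$. The first two are tracked exactly by counters of $O(\log d)$ bits. The third equals $\ell_0(a-b)$ for the characteristic vectors $a,b\in\{0,1\}^d$, and by the linearity of the $\ell_0$ sketches of Kane, Nelson and Woodruff~\cite{KNW10}, independently maintained sketches of $a$ and $b$ can be combined at query time to produce a $(1\pm\varepsilon)$-multiplicative estimate $\hat q$ of $q:=|A\bigtriangleup B|$ in $O(\varepsilon^{-2}\log d)$ space with $O(1)$ update and query time.

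Given exact $|A|,|B|$ and the estimate $\hat q$, every cardinality in the definition of $S_{x,y,z,z'}$ is a linear function of these: $|A\cap B|=\tfrac12(|A|+|B|-q)$, $|A\cup B|=\tfrac12(|A|+|B|+q)$ and $|\overline{A\cup B}|=d-|A\cup B|$. Substituting into the definition reduces the distance to the closed form
\[
1-S_{x,y,z,z'}(A,B)=\frac{(z'-z)\,q}{D\,q+C},\quad D:=z'-\tfrac{x+y}{2},\ C:=\tfrac{x-y}{2}(|A|+|B|)+yd,
\]
so only $q$ needs to be approximated from the sketches; everything else is exact.

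The key structural observation is that Theorem~\ref{thm:characterization} yields $z'\ge\max(x,y,z)$, which immediately forces $D\ge 0$; and using $|A|+|B|\le 2d$ together with $x,y\ge 0$, a short case distinction on the sign of $x-y$ shows $C\ge 0$ as well. Non-negativity of both $C$ and the coefficient $D$ is what lets the multiplicative error propagate cleanly: if $\hat q\in(1\pm\varepsilon)q$, the numerator is $(1\pm\varepsilon)(z'-z)q$ and the denominator $D\hat q+C$ lies in $[(1-\varepsilon)(Dq+C),(1+\varepsilon)(Dq+C)]$, so the estimator approximates $1-S_{x,y,z,z'}$ within a factor $(1\pm O(\varepsilon))$; rescaling $\varepsilon$ by a constant absorbs the blowup and the constant success probability of the $\ell_0$ sketch carries over. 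The degenerate case $q=0$ (equivalently $1-S=0$) is detected directly from the $\ell_0$ sketch. The main obstacle is precisely this sign analysis: without the metric hypothesis $z'\ge\max(x,y,z)$, either $C$ or $D$ could turn negative and a multiplicative error on $q$ would not propagate through the ratio, so the argument genuinely rests on the characterization of Theorem~\ref{thm:characterization} rather than on properties of $\ell_0$ sketches alone.
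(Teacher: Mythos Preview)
Your argument is correct. It differs from the paper's proof in the choice of decomposition, and the difference is worth recording.

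The paper estimates \emph{two} quantities via $\ell_0$ sketches, namely $|A\bigtriangleup B|=\ell_0(a-b)$ and $|A\cup B|=\ell_0(a+b)$, and then rewrites the denominator $D(A,B)$ as a non-negative combination of $n$, $|A\cup B|$, $|A\bigtriangleup B|$ when $x\ge y$; for $x<y$ it passes to the complements $\overline{A},\overline{B}$ and uses $|\overline{A}\cup\overline{B}|$ and $|\overline{A}\bigtriangleup\overline{B}|$ instead. You instead keep $|A|$ and $|B|$ exactly with counters and approximate only $q=|A\bigtriangleup B|$, rewriting the denominator as $Dq+C$ with $D=z'-\tfrac{x+y}{2}$ and $C=\tfrac{x-y}{2}(|A|+|B|)+yd$. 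Your sign analysis $D\ge 0$ (from $z'\ge\max(x,y)$) and $C\ge 0$ (a two-line case split on the sign of $x-y$, using $|A|+|B|\le 2d$ and $x\ge 0$) plays exactly the role that the paper's two separate reformulations play.

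What this buys you: only one approximated quantity enters the ratio, so the error propagation is marginally tighter and the complement trick for $x<y$ disappears entirely. What the paper's version buys: it never needs to know $|A|$ or $|B|$ individually, only $\ell_0$ sketches of $a$ and $b$, which makes the argument a touch more modular if one later wants to replace exact counts by sketches. Both routes rest on the same structural input, namely $z'\ge\max(x,y,z)$ from Theorem~\ref{thm:characterization}, and both conclude by plugging in the Kane--Nelson--Woodruff $\ell_0$ sketch.
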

\begin{proof}
We start with the observation that $\vert A\bigtriangleup B\vert = \ell_0(a-b)$ and $\vert A\cup B\vert = \ell_0(a+b)$, where $a$ and $b$ are the characteristic vectors of $A$ and $B$, respectively. Since $Den(A,B) - Num(A,B) = (z'-z) \cdot |A\bigtriangleup B|$ is always non-negative due to $z'\geq z$, we only have to prove that $Den(A,B)$ is always a non-negative linear combination of terms that we can approximate via sketches.
First, consider the case $x\geq y$.
Reformulating $Den(A,B)$, we have
\[Den(A,B)  = y\cdot |U| + (x-y)\cdot \vert A\cup B\vert + (z'-x)\cdot \vert A\bigtriangleup B|.\]
Then both numerator and denominator of $1-S_{x,y,z,z'}$ can be written as a non-negative linear combination of $n$, $\vert A\bigtriangleup B\vert$ and $\vert A\cup B\vert$. Given a $(1\pm \varepsilon)$ of these terms, we have an upper bound of $\frac{1+\varepsilon}{1-\varepsilon} \leq (1+\varepsilon)\cdot(1+2\varepsilon) \leq (1+5\varepsilon)$ and a lower bound of $\frac{1-\varepsilon}{1+\varepsilon}\geq (1-\varepsilon)^2 \geq (1-2\varepsilon)$ for any $\varepsilon\leq 0.5$. 

Now consider the case $x<y$. We first observe
$S_{x,y,z,z'}(A,B)=S_{y,x,z,z'}(\overline{A},\overline{B}).$
Therefore
\begin{equation}
\nonumber
Den(A,B) =  (y-x)\cdot \vert \overline{A}\cup \overline{B}\vert + x\cdot |U|  + (z'-y)\cdot \vert \overline{A} \bigtriangleup \overline{B} \vert.
\end{equation}
Again, we can write the denominator as a non-negative linear combination of $\vert \overline{A} \bigtriangleup \overline{B} \vert$, $n$ and $\vert \overline{A} \cup \overline{B}\vert$. Dynamic updates can maintain an approximation of $\vert \overline{A} \bigtriangleup \overline{B} \vert$ and $\vert \overline{A} \cup \overline{B}\vert$, leading to upper and lower bounds on the approximation ratio analogous to those from case $x\geq y$.

By plugging in the $\ell_0$ sketch of Theorem~\ref{thm:knw} and rescaling $\varepsilon$ by a factor of $5$, the theorem follows. 
\end{proof}

Using a similar approach, we can approximate the distance of root similarity functions admitting a locality hashing scheme.
We first repeat the following characterization.

\begin{theorem}[Th. 4.8 and 4.9 of~\cite{ChK15}]
\label{thm:rootchar}
The root similarity $S_{x,y,z,z'}^{\alpha}$ is LSHable if and only if $z'\geq \frac{\alpha+1}{2}\max(x,y)$ and $z'\geq z$.
\end{theorem}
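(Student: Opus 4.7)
The plan is to reduce the LSHability question for the root similarity to a metric question for its distance, and then to analyze the metric condition directly. Since $1-S^\alpha_{x,y,z,z'}(A,B) = (1-S_{x,y,z,z'}(A,B))^\alpha$, the question becomes: for which parameters is the function $d^\alpha := (1-S_{x,y,z,z'})^\alpha$ the distance of an LSH on $2^U$? I expect the answer to come in the same shape as Theorem~\ref{thm:characterization}: a triangle-type condition on $(x,y,z,z',\alpha)$ characterizes when $d^\alpha$ is a metric, and metric-ness is then parlayed into an explicit LSH.

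For the ``if'' direction, under $z'\geq \frac{\alpha+1}{2}\max(x,y)$ and $z'\geq z$ I would first establish that $d^\alpha$ satisfies the triangle inequality on arbitrary triples $A,B,C\subseteq U$. Parameterize the seven cells of the Venn diagram of $A,B,C$ by nonnegative reals; the inequality $d(A,C)^\alpha\leq d(A,B)^\alpha+d(B,C)^\alpha$ becomes a (messy) homogeneous inequality in those seven variables. I would use scaling and concavity to reduce to a low-dimensional extremal configuration (typically with several cells vanishing). Once $d^\alpha$ is a metric, to build the LSH I would either use a ``snowflake'' construction — take the Chierichetti--Kumar LSH that realizes $d$ as a collision distance (valid when $\alpha=1$) and compose with an independent random thinning whose survival probability is $1-d^{\alpha-1}$, or equivalently a Poisson-type subsampling — so that the resulting scheme has collision probability $1-d^\alpha$, which is exactly $S^\alpha_{x,y,z,z'}$.

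For the ``only if'' direction, I would apply Charikar's implication (1)$\Rightarrow$(2) from Theorem~\ref{thm:characterization}, which does not use rationality of the similarity and gives that any LSHable similarity has a metric distance; thus $d^\alpha$ must be a metric. I would then exhibit three-set configurations that violate the triangle inequality whenever $z'<\frac{\alpha+1}{2}\max(x,y)$ or $z'<z$. A natural family is $A = \{1\}$, $B = \emptyset$, $C = \{2\}$ (possibly padded with ``outside'' elements to switch on the $y\cdot\vert\overline{A\cup B}\vert$ term when $y>x$); on such triples $d$ collapses to a simple ratio of $z'-z$ and $x$ (or $y$), and taking $\alpha$-th powers pins the triangle inequality to the stated threshold. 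The condition $z'\geq z$ falls out immediately since it is required even for $S_{x,y,z,z'}$ to take values in $[0,1]$.

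The main obstacle I expect is the tight constant $\frac{\alpha+1}{2}$ in the necessity direction: a careless three-point argument will give a weaker bound. Isolating the genuinely worst configuration — likely a limit where two Venn cells degenerate and the ratio of the remaining cells is tuned to an $\alpha$-dependent optimum — is where the real work lies. Once that extremal triple is identified, both directions match up because the same configuration saturates the sufficient triangle inequality verified in the first step.
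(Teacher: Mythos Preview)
This theorem is not proved in the present paper at all: it is quoted verbatim as Theorems~4.8 and~4.9 of Chierichetti and Kumar~\cite{ChK15} and used as a black box in the proof of Theorem~\ref{thm:rootdist}. There is therefore no ``paper's own proof'' to compare your proposal against.

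That said, your sketch of the \emph{if} direction has a genuine gap. You propose to (i) show $d^\alpha$ is a metric and then (ii) manufacture an LSH from this. Step~(ii) is where the plan breaks. Charikar's implication only gives metric $\Leftarrow$ LSHable; the converse was established by Chierichetti and Kumar \emph{specifically for rational set similarities}, and root similarities with $\alpha<1$ are not rational. So ``$d^\alpha$ is a metric'' does not by itself hand you an LSH. Your concrete construction --- thin an LSH for $d$ by an independent coin with survival probability $1-d^{\alpha-1}$ --- cannot work either: for $0<\alpha<1$ and $0<d<1$ one has $d^{\alpha-1}>1$, so the purported probability is negative, and in any case a valid hash family cannot let the mixing probability depend on the \emph{pair} $(A,B)$ being compared. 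The actual construction in~\cite{ChK15} is not a snowflake of the $\alpha=1$ scheme; it builds the LSH for $S^\alpha_{x,y,z,z'}$ directly.

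Your \emph{only if} direction is on firmer ground: invoking Charikar to force $d^\alpha$ to be a metric and then exhibiting a triple that violates the triangle inequality when $z'<\frac{\alpha+1}{2}\max(x,y)$ is the right shape of argument, and you correctly anticipate that pinning the exact constant $\frac{\alpha+1}{2}$ requires optimizing over the configuration rather than picking an arbitrary one.
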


\begin{theorem}
\label{thm:rootdist}
Given a constant $0<\varepsilon\leq 0.5$, two item sets $A,B\subseteq U$ and some LSHable root similarity $S_{x,y,z,z'}^{\alpha}$, there exists a dynamic streaming algorithm that maintains a $(1\pm \varepsilon)$ approximation to $1-S_{x,y,z,z'}^{\alpha}(A,B)$ with constant probability. The algorithm uses $O(\frac{1}{\varepsilon^2}\log \lvert U \rvert)$ space and each update and query requires $O(1)$ time.
\end{theorem}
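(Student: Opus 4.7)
The plan is to mirror Theorem~\ref{thm:streamdist} and append a final step that raises the computed estimate to the power $\alpha$. The starting observation is the identity $1-S_{x,y,z,z'}^{\alpha} = (1-S_{x,y,z,z'})^{\alpha}$, so a $(1\pm\varepsilon)$-multiplicative approximation of $1-S_{x,y,z,z'}$ is sufficient: for $\alpha\in(0,1]$ and $\varepsilon\in(0,1)$, monotonicity of $t\mapsto t^{\alpha}$ together with $1+\varepsilon\geq 1$ and $1-\varepsilon\leq 1$ gives $(1+\varepsilon)^{\alpha}\leq 1+\varepsilon$ and $(1-\varepsilon)^{\alpha}\geq 1-\varepsilon$, so exponentiation by $\alpha$ preserves the approximation ratio. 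Following the blueprint of Theorem~\ref{thm:streamdist}, I would approximate the numerator $(z'-z)|A\bigtriangleup B|$ and denominator $D(A,B)$ of $1-S_{x,y,z,z'}$ separately via $\ell_0$ sketches and then take their ratio.

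The hard part is showing that $D(A,B)$ admits a $(1\pm\varepsilon)$-approximation under the weaker hypothesis of Theorem~\ref{thm:rootchar}. Reusing the reformulation
\[ D(A,B) = y\cdot n + (x-y)\cdot|A\cup B| + (z'-x)\cdot|A\bigtriangleup B| \]
from Theorem~\ref{thm:streamdist} in the case $x\geq y$ (the case $y>x$ is handled by the analogous complement-based identity), the coefficient $(z'-x)$ can now be negative, since the root LSH condition only guarantees $z' \geq \tfrac{\alpha+1}{2}x$, which may be strictly less than $x$. My plan for handling this is to observe that the offending summand is still controlled: since $z'|A\bigtriangleup B|\leq D(A,B)$ trivially and $z'\geq \tfrac{\alpha+1}{2}x$, one obtains $(x-z')|A\bigtriangleup B| \leq \tfrac{1-\alpha}{1+\alpha}D(A,B)$. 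Consequently the signed error propagated from $(1\pm\varepsilon')$-approximations of $n$, $|A\cup B|$, $|A\bigtriangleup B|$ is at most a constant factor (depending only on $\alpha$) times $\varepsilon'\cdot D(A,B)$, so rescaling $\varepsilon'$ accordingly recovers a $(1\pm\varepsilon)$-multiplicative approximation of $D(A,B)$.

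The remaining steps are routine. The numerator $(z'-z)|A\bigtriangleup B|$ is directly approximated to $(1\pm\varepsilon)$ by a single $\ell_0$ sketch of $a-b$, so the ratio $1-S_{x,y,z,z'}$ is known to $(1\pm O(\varepsilon))$ accuracy, and the same accuracy survives the final exponentiation by $\alpha$. Plugging in the $\ell_0$ sketch of Kane, Nelson, and Woodruff (Theorem~10 of~\cite{KNW10}) and rescaling $\varepsilon$ by the constant hidden in the $O$-notation yields the claimed $O(\tfrac{1}{\varepsilon^2}\log d)$ space bound together with constant update and query times.
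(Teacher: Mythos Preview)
Your argument is correct and reaches the same conclusion, but the route differs from the paper's in how $D(A,B)$ is approximated. You keep the Theorem~\ref{thm:streamdist} decomposition $D = y\cdot n + (x-y)\,|A\cup B| + (z'-x)\,|A\bigtriangleup B|$ and absorb the possibly negative coefficient by bounding $(x-z')\,|A\bigtriangleup B|\le \tfrac{1-\alpha}{1+\alpha}\,D$ via $z'|A\bigtriangleup B|\le D$; from this the remaining positive term satisfies $(x-y)|A\cup B| = D - yn + (x-z')|A\bigtriangleup B| \le \tfrac{2}{1+\alpha}\,D$, so the total propagated error is at most $\tfrac{3+\alpha}{1+\alpha}\,\varepsilon' D \le 4\varepsilon' D$. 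The paper instead switches to the decomposition $D = y\cdot n + (x-z')\,|A\cap B| + (z'-y)\,|A\cup B|$ and estimates $|A\cap B|$ with additive error $\varepsilon\,|A\cup B|$ through $|A|+|B|-\widehat{|A\cup B|}$, then uses $x-z'\le z'$ together with $\max(y,z'-y)\ge z'/2$ to control the error. Your version is arguably cleaner since it avoids introducing the intersection estimate and simply reuses the sketches from Theorem~\ref{thm:streamdist}; you also spell out the exponentiation step $(1-S)^\alpha$ and why it preserves the multiplicative ratio, which the paper leaves implicit. One small point worth stating explicitly: the ``constant depending only on~$\alpha$'' you mention is in fact bounded uniformly over $\alpha\in(0,1]$ (by~$4$, say), so the rescaling does not introduce any hidden $\alpha$-dependence in the $O(\varepsilon^{-2}\log d)$ space bound.
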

\begin{proof}
We consider the case $x\geq y$, the case $y\geq x$ can be treated analogously.
Again we will show that we can $(1\pm\varepsilon)$-approximate the denominator; the remaining arguments are identical to those of Theorem~\ref{thm:streamdist}.
Consider the following reformulation of the denominator 
\[Den(A,B)  = y\cdot n + (x-z')\cdot \vert A\cap B\vert + (z'-y)\cdot \vert A \cup B \vert.\]

We first note that we can obtain an estimate of $|A\cap B|$ in a dynamic data stream with additive approximation factor $\varepsilon\cdot |A\cup B|$ by computing $|A| + |B| - \widehat{|A\cup B|}$, where $\widehat{|A\cup B|}$ is a $(1\pm \varepsilon)$-approximation of $|A\cup B|$.

Due to Theorem~\ref{thm:rootchar}, we have $x-z'\leq 2\cdot z' -z' \leq z'$ and either $z'-y \geq  \frac{z'}{2}$ or $y\geq \frac{z'}{2}$. Hence $\varepsilon\cdot (x-z') \leq \varepsilon\cdot z' \leq 2\varepsilon \cdot \max(z',(z'-y))$.
Since further $|U|\geq |A\cup B|$, we then obtain a $(1\pm 2\varepsilon)$-approximation to the denominator. Rescaling $\varepsilon$ completes the proof.
\end{proof}

\begin{remark}
Theorems~\ref{thm:streamdist} and~\ref{thm:rootdist} are not a complete characterization of divergences  induced by similarities that can be $(1\pm\varepsilon)$-approximated in dynamic streams.  
Consider, for instance, the S{\o}renson-Dice coefficient $S_{2,0,0,1} = \frac{2\cdot |A\cap B|}{|A|+|B|}$ with $1-S_{2,0,0,1} = \frac{|A\bigtriangleup B|}{|A|+|B|}$.
Neither is $1-S_{2,0,0,1}$ a metric, nor do we have $z' \geq \frac{\alpha+1}{2} x$ for any $\alpha>0$. 
However, both numerator and denominator can be approximated using $\ell_0$ sketches.
\end{remark}

The probability of success can be further amplified to $1-\delta$ in the standard way by taking the median estimate of $O(\log(1/\delta))$ independent repetitions of the algorithm. For $n$ item sets, and setting $\delta=1/n^2$, we then get the following.
\begin{corollary}
\label{cor:streamapprox}
Let $S$ be a rational set similarity with metric distance function $1-S$.
Given a dynamic data stream consisting of updates of the form $(i,j,v) \in [n] \times [|U|] \times \lbrace -1, +1 \rbrace$ meaning that $a^{(i)}_j = a^{(i)}_j+v$ where $a^{(i)} \in \bset^{|U|}$ with $i\in\{ 1,\ldots,n\}$, there is a streaming algorithm that can compute with constant probability for all pairs $(i,i')$

\noindent$\bullet$ a $(1\pm \varepsilon)$ multiplicative approximation of $1-S(a^i,a^{i'})$  and

\noindent$\bullet$ an $\epsilon$-additive approximation of $S(a^i,a^{i'})$.
\\
\noindent The algorithm uses $O(n\log n \cdot \varepsilon^{-2} \cdot \log \lvert U \rvert)$ space and each update and query needs $O(\log n)$ time.
\end{corollary}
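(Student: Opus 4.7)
The plan is to lift the single-pair guarantee of Theorem~\ref{thm:streamdist} to all $\binom{n}{2}$ pairs simultaneously via independent instantiations with amplified success probability, exploiting the linearity of the underlying $\ell_0$ sketch so that each set can be sketched in isolation.

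Concretely, for each item set $x^{(j)}$ I would maintain an independent sketch $\mathcal{S}(x^{(j)})$ of the type used in Theorem~\ref{thm:streamdist}, namely $O(\log n)$ independent copies of the Kane--Nelson--Woodruff $\ell_0$ sketch (each of size $O(\varepsilon^{-2}\log d)$) so that, by the standard median trick, each pairwise estimate is a $(1\pm\varepsilon)$ approximation with probability at least $1-1/n^3$. The key point is that the $\ell_0$ sketches are linear: the sketch of the vector $x^{(j)}-x^{(j')}$ (resp.\ $x^{(j)}+x^{(j')}$, or the characteristic vectors of the complements) can be computed as the difference (resp.\ sum) of the sketches of $x^{(j)}$ and $x^{(j')}$, so it suffices to store one sketch per set rather than one per pair. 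On an update $(j,i,v)$ we only touch $\mathcal{S}(x^{(j)})$, feeding the update into each of its $O(\log n)$ component sketches in constant time per component.

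At query time, for a pair $(j,j')$ I would combine $\mathcal{S}(x^{(j)})$ and $\mathcal{S}(x^{(j')})$ to recover $(1\pm\varepsilon)$ approximations of whichever of the quantities $n,\ |A\bigtriangleup B|,\ |A\cup B|,\ |\overline{A}\bigtriangleup\overline{B}|,\ |\overline{A}\cup\overline{B}|$ are needed by the reformulation in the proof of Theorem~\ref{thm:streamdist}, then take the median across the $O(\log n)$ repetitions; a union bound over the at most $n^2$ pairs (and the $O(1)$ linear combinations per pair) yields constant overall success probability. For the additive similarity approximation, observe that if $\widehat{d}$ is a $(1\pm\varepsilon)$ approximation of $d=1-S(x^{(j)},x^{(j')})\in[0,1]$, then $|\widehat{d}-d|\le \varepsilon\,d\le \varepsilon$, so setting $\widehat{S}=1-\widehat{d}$ gives $|\widehat{S}-S|\le \varepsilon$.

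The space cost is $n$ sets times $O(\log n)$ repetitions times $O(\varepsilon^{-2}\log d)$ bits per sketch, matching the claimed $O(n\log n\cdot \varepsilon^{-2}\log d)$; the update and query times are each $O(\log n)$ since every operation touches the $O(\log n)$ component sketches once. The only mildly delicate step is ensuring that the single-pair success bound really does amplify cleanly to a union bound over $n^2$ pairs using the \emph{same} $O(\log n)$ independent copies per set (rather than fresh randomness per pair); this is where one has to verify that each component sketch, when combined linearly with the corresponding component of another set, is itself an independent estimator of the target quantity, so that the median concentration argument of Theorem~\ref{thm:streamdist} applies verbatim to every fixed pair. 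Once this is noted, the rest is bookkeeping.
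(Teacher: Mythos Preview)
Your proposal is correct and matches the paper's approach exactly: the paper's entire ``proof'' of this corollary is the single sentence preceding it, namely that one amplifies the success probability of Theorem~\ref{thm:streamdist} to $1-\delta$ via the median of $O(\log(1/\delta))$ independent repetitions and then union-bounds over all $\binom{n}{2}$ pairs. You have simply spelled out the details the paper leaves implicit (linearity of the $\ell_0$ sketch so that one sketch per set suffices, and the derivation of the additive bound from the multiplicative one); the only wording to tighten is your phrase ``an independent sketch $\mathcal{S}(x^{(j)})$'' early on, since the randomness must be \emph{shared} across the $n$ sets for linearity to be usable---but your final paragraph makes clear you understand this.
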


We note that despite the characterization of LSHable rational set similarities of Theorem~\ref{thm:characterization}, the existence of the  approximations of Corollary~\ref{cor:streamapprox} hints at, but does not directly imply the existence of a locality sensitive hashing scheme or even an approximate locality sensitive hashing scheme on the sketched data matrix in dynamic streams.
Our second and main contribution now lies in the design of a simple LSH scheme maintainable in dynamic data streams, albeit with weaker approximation ratios. The scheme is space efficient, easy to implement and to the best of our knowledge the first of its kind able to process deletions.

\begin{remark}
Corollary~\ref{cor:streamapprox} also implies that any algorithm based on the pairwise distances of a rational set similarity admits a dynamic streaming algorithm using $n\cdot \text{ polylog }(n|U|)$ bits of space. Notable examples include hierarchical clustering algorithms such as single or complete linkage, distance matrix methods used in phylogeny, and visualization methods such as heatmaps. Though the main focus in the experimental section (Section~\ref{sec:experiments}) will be an evaluation of the dynamic hashing performance, we also briefly explore clustering and visualization methods based on the sketched distance matrix.
\end{remark}

\section{An LSH algorithm for dynamic data streams}

\label{sec:filter}

In the following, we will present a simple dynamic streaming algorithm that supports Indyk and Motwani-type sensitivity.
Recall that we want to find pairs of users with similarity greater than a parameter $r_1$, while we do not want to report pairs with similarity less than $r_2$. 
The precise statement is given via the following theorem.

\begin{theorem}
\label{thm:minhash_dynamic}
Let $0 < \varepsilon, \delta, r_1,r_2 < 1$ be parameters.
Given a dynamic data stream with $n$ users and $|U|$ attributes, there exists an algorithm that maintains a $(r_1,r_2,(1-\varepsilon)r_1,6r_2 /(\delta(1-\varepsilon/5 \sqrt{2r_1}))$-sensitive LSH for Jaccard similarity with probability $1-\delta$.
For each user,  $O(\frac{1}{\varepsilon^4\delta^5\cdot r_1^2}\log^2 |U|)$ bits of space are sufficient. The update time is $O(1)$.
\end{theorem}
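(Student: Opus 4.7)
The plan is to combine an exact cardinality filter with the sampling bound of Lemma \ref{lem:samplesim_general} and a MinHash over the bucket arrays maintained by Algorithm \ref{alg:preprocess}.

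First, I would argue that the cardinality filter in Algorithm \ref{alg:filter} is lossless for the near case: any pair $(A,B)$ with $|A|/|B|\notin[r_1,1/r_1]$ already has Jaccard similarity at most $r_1$ by the displayed inequality preceding the algorithms, so it is safe to restrict attention to pairs of comparable size. The cardinalities $s_j$ are trivially maintained on a dynamic stream via a single counter, so the filter is exact.

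Second, for a retained pair with $|A|=s$ and $|B|\in[r_1 s,s/r_1]$, I would invoke Corollary \ref{cor:simsample} for Jaccard with $r=r_1$: any level $k$ with $k\le\log(\varepsilon^2\delta r_1 s/c_0)$ for an absolute constant $c_0$ satisfies the hypotheses of Lemma \ref{lem:samplesim_general} simultaneously for $A$ and $B$. The intervals $[\log(r_1^2 p s),\log(p s)]$ that Algorithm \ref{alg:filter} sweeps over for $A$ and $B$ have intersection of length at least $\log(1/r_1)$, which is precisely the spacing of the grid $I$; hence at least one $k\in I$ is admissible. For such a $k$, part (1) of Lemma \ref{lem:samplesim_general} gives $\mathrm{Jac}(A_k,B_k)\ge(1-\varepsilon)\mathrm{Jac}(A,B)$ whenever $\mathrm{Jac}(A,B)\ge r_1$, and part (2) gives $\mathrm{Jac}(A_k,B_k)\le \mathrm{Jac}(A,B)/(\delta(1-(\varepsilon/5)\sqrt{r_1}))$ in general, each with probability $1-O(\delta)$ over the choice of $S_k$.

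Third, I would argue that the MinHash of $B^{(j)}_{k,\bullet}$ coincides with a MinHash on the true sample $A_k$ whenever the secondary hash $h_k:[d]\to[c^2]$ is collision-free on $A_k\cup B_k$, since in that case each nonzero bucket corresponds to exactly one sampled index. Because the expected size of $A_k\cup B_k$ on admissible levels is $O(1/(r_1^2 p))$, pairwise independence of $h_k$ combined with Markov's inequality on the number of colliding pairs lets me take $c^2=\Theta(r_1^{-4}p^{-2}\delta^{-1})$ and get collision-freeness with probability $1-O(\delta)$. Under all these good events, the MinHash of the bucket arrays collides with probability exactly $\mathrm{Jac}(A_k,B_k)$, so the sensitivity claim follows by a final union bound.

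For the space, each item set stores $|I|=O(\log d/\log(1/r_1))$ levels of $c^2$ counters; choosing $p$ as the right monomial in $\varepsilon, r_1, \delta$ balances the two constraints above and yields the advertised $O(n\varepsilon^{-3}r_1^{-6}\delta^{-2}\log d)$ bound. The main obstacle is the coordination between the three scales involved: $p$ must be small enough that Lemma \ref{lem:samplesim_general} certifies a $(1\pm\varepsilon)$-quality sample on the densest admissible level, $c^2$ must be large enough that perfect hashing into $[c^2]$ holds with probability $1-O(\delta)$ on that level, and the spacing of $I$ must not exceed the cardinality window allowed by the filter. Once these are in agreement, both the sensitivity and the space bound follow mechanically.
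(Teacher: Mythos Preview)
Your proposal is correct and follows essentially the same route as the paper's proof: cardinality filter to restrict to comparable sizes, invoke Lemma~\ref{lem:samplesim_general}/Corollary~\ref{cor:simsample} at an admissible level $k$ with $p=(\varepsilon/5)^2 r_1\delta$, bound $|A_k\cup B_k|$ so that $c^2=\Theta(\delta^{-1}p^{-2}r_1^{-4})$ buckets guarantee perfect hashing by $h_k$, and conclude that the MinHash collision probability on the bucket arrays equals $\mathrm{Jac}(A_k,B_k)$. Your additional argument that the level windows for $A$ and $B$ overlap on at least one grid point of $I$ is something the paper only remarks on after the proof, so in that respect your write-up is slightly more complete; the only minor discrepancy is that Algorithm~\ref{alg:preprocess} actually maintains all $O(\log d)$ levels (not just those in $I$), but this does not affect the stated space bound.
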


The proof of this theorem consists of two parts. First, we give a probabilistic lemma from which we derive the sensitivity parameters. Second, we describe how the sampling procedure can be implemented in a streaming setting.

\subsection{Sensitivity Bounds}
While a black box reduction from any $\ell_0$ sketch seems unlikely, we note that most $\ell_0$ algorithms are based on bit-sampling techniques similar to those found in min-hashing. 
Our own algorithm is similarly based on sampling a sufficient number of bits or item indexes from each item set. Given a suitably filtered set of candidates, these indexes are then sufficient to infer the similarity. 
Let $U_k\subseteq U$ be a random set of elements where each element is included with probability $2^{-k}$. Further, for any item set $A$, let $A_k=A\cap U_k$. Note that in $S_{x,y,z,z'}(A_k,B_k)$ the value of $|U|$ is replaced by $|U_k|$. 
At the heart of the algorithm now lies the following technical lemma.
\begin{lemma}
\label{lem:samplesim_general}
Let $0<\varepsilon,\delta,r <1$ be constants and $S_{x,y,z,z'}$ be a rational set similarity with metric distance function. 
Let $A$ and $B$ be two item sets.
Assume we sample every item uniformly at random with probability $2^{-k}$, where $k  \leq \log \left( \dfrac{\varepsilon^2\cdot \delta\cdot r\cdot Den_{x,y,z,z'}(A,B)}{100\cdot z'}\right )$.
Then with probability at least $1-\delta$ the following two statements hold.
\begin{enumerate}
\item If $S_{x,y,z,z'}(A,B) \geq r$ we have
$ (1-\varepsilon)S_{x,y,z,z'}(A,B) \leq S(A_k, B_k) \leq (1+\varepsilon)S_{x,y,z,z'}(A,B).$
\item $S_{x,y,z,z'}(A_k, B_k)\leq \dfrac{2\cdot S_{x,y,z,z'}(A,B)}{\delta(1-(\varepsilon/5) \cdot \sqrt{2r})}.$
\end{enumerate}
\end{lemma}

We note that any metric distance function induced by a rational set similarity satisfies $z'\geq \max(x,y,z)$, see Theorem~\ref{thm:characterization} in Section~\ref{sec:est}.
\begin{proof}
Let $Den_k = Den(A_k,B_k)$, $Num_k = Num(A_k,B_k)$, and $X_i = 1$ iff $i \in U_k$.
If $S_{x,y,z,z}(A,B) \geq r$ then $Num(A,B) \geq r \cdot Den(A,B)$. 
Thus, we have $\mathbb{E}[Num_k] = Num(A,B) / 2^{k} \geq r\cdot Den(A,B) / 2^{k}$ and $\mathbb{E}[Den_k] = Den(A,B) / 2^{k}$. 
Further, we have $\Var{X_i} = 2^{-k}\cdot(1-2^{-k})\leq 2^{-k}$. for any $X_i$.
We first give a variance bound on the denominator.
\begin{eqnarray*}
& & \Var{Den_k}\\
& = & \mathbf{Var}[x \cdot \vert A_k \cap B_k \vert + y\cdot(\vert U_k \vert - |A_k\cup B_k|) | \\
& &+ z' \cdot \vert A_k \bigtriangleup B_k \vert]\\
&= & x^2\sum_{i \in A \cap B} \Var{X_i} + 
 y^2 \sum _{i \in \overline{A \cup B}}\Var{X_i} \\
 & &+ z'^2\sum_{i \in A \bigtriangleup B}\Var{X_i} \\
&= & \left( (x^2-y^2) \vert A \cap B \vert + y^2 \cdot |U| \right.\\
& & \left. + (z'^2-y^2) \vert A \bigtriangleup B \vert \right)\Var{X_i} \\
&\leq & \left( (x^2-y^2) \vert A \cap B \vert + y^2 \cdot |U| + (z'^2-y^2) \vert A \bigtriangleup B \vert \right)/2^k \\
&\leq &\frac{1}{2^k}\max\lbrace x+y, z'+y, y \rbrace\cdot ( (x-y) \vert A \cap B \vert + \\
& & y \cdot d + (z'-y) \vert A \bigtriangleup B \vert) \leq  2z' \cdot \Ex{Den_k}
\end{eqnarray*}
and analogously
$$\Var{Num_k} \leq \max\lbrace x+y, z+y, y \rbrace \cdot \Ex{Num_k}.$$
Using Chebyshev's inequality we have
\begin{eqnarray*}
& &\mathbb{P}\left[ Den_k - \Ex{Den_k}  \geq \varepsilon/5\cdot \Ex{Den_k}\right] \\
& \leq &  \dfrac{50z'}{\varepsilon^2\cdot \Ex{Den_k}} \leq  \dfrac{50z'\cdot 2^k}{\varepsilon^2 \cdot Num(A,B)}, 
\end{eqnarray*}
and
\begin{eqnarray*}
& &\mathbb{P}\left[ Num_k - \Ex{Num_k} \geq \varepsilon/5\cdot \Ex{Num_k}\right] \\
& \leq &  \dfrac{25\max\lbrace x+y, z+y, y \rbrace}{\varepsilon^2\cdot \Ex{Num_k}}  \leq  \dfrac{50z' \cdot 2^k}{\varepsilon^2 \cdot Num(A,B)}. 
\end{eqnarray*}
If $k \leq \log \left( \dfrac{\varepsilon^2 \delta r Den(A,B)}{100z'}\right)\leq \log \left( \dfrac{\varepsilon^2 \delta Num(A,B)}{100z'} \right)$ then both $ Den_k - \Ex{Den_k} \leq \frac{\varepsilon}{5} \Ex{Den_k}$ and $ Num_k - \Ex{Num_k}  \leq \frac{\varepsilon}{5} \Ex{Num_k}$ hold with probability at least $1-\delta/2$. 
Then we can bound $S(A_k,B_k) = Num_k/Den_k$ from above by 
\begin{eqnarray*}
& &\dfrac{Num(A,B) / 2^k + \varepsilon Num(A,B)}{Den(A,B)/2^k-\varepsilon Den(A,B) / 2^k)} \\
&=&\dfrac{1+\varepsilon/5}{1-\varepsilon/5} \cdot S_{x,y,z,z'}(A,B) \leq (1+\varepsilon)\cdot S_{x,y,z,z'}(A,B).  
\end{eqnarray*}
Analogously, we can bound $S_{x,y,z,z'}(A_k,B_k)$ from below by $\frac{1-\varepsilon/5}{1+\varepsilon/5} \cdot S_{x,y,z,z'}(A,B)\geq (1-\varepsilon)\cdot S_{x,y,z,z'}(A,B)$ which concludes the proof of the first statement. 

For the second statement, we note that the expectation of $Num_k$ can be very small because we have no lower bound on the similarity.  Hence, we cannot use Chebyshev's inequality for an upper bound on  $Num_k$. 
But it is enough to bound the probability that $Num_k$ is greater than or equal to $(2/\delta) \cdot \Ex{Num_k}$ by $\delta/2$ using Markov's inequality. 
With the same arguments as above, we have that the probability of $Den_k \leq (1-\varepsilon') \cdot \Ex{Den_k}$ is bounded by $\frac{\varepsilon^2 r \delta}{25\cdot\varepsilon'^2}$ which is equal to $\delta/2$ if $\varepsilon' = \varepsilon/5 \cdot \sqrt{2r}$.  
Putting everything together we have that 
$$ S_{x,y,z,z'}(A_k,B_k) \leq \dfrac{2}{\delta(1-(\varepsilon/5) \cdot \sqrt{2r})} \cdot S_{x,y,z,z'}(A,B) $$
with probability at least $1-\delta$.
\end{proof}

We note that for similarities with $y>x$, we can obtain the same bounds by sampling $0$-entries instead of $1$-entries. Since we are not aware of any similarities with this property used in practice, we limited our analysis to the arguably more intuitive case $x\geq y$.

Applying this lemma on a few better known similarities gives us the following corollary. 
We note that to detect candidate high similarity pairs for an item set $A$, $Den:=|A\cup B|\geq |A|$ for Jaccard  and $Den:=|A\cup B| + |A\bigtriangleup B| \geq |A|$ for Anderberg. For Hamming and Rogers-Tanimoto similarities, $Den\geq |U|$.
More examples of rational set similarities can be found in Naish, Lee, and Ramamohanarao~\cite{NLR11}.
\begin{corollary}
\label{cor:simsample}
For the following similarities, the following values of $k$ are sufficient to apply Lemma \ref{lem:samplesim_general}:
\begin{center}
\begin{tabular}{c||c|c}
Similarity & Parameters & Sampling Rate \\
\hline\hline
Jaccard & $S_{1,0,0,1}$ & $\log \left( \varepsilon^2 \delta r \vert A\vert/100  \right)$\\
\hline
Hamming & $S_{1,1,0,1}$ & $\log \left( \varepsilon^2 \delta r \vert U\vert/100  \right)$\\
\hline
Anderberg & $S_{1,0,0,2}$ & $\log \left( \varepsilon^2 \delta r \vert A\vert/200  \right)$ \\
\hline
Rogers-Tanimoto & $S_{1,1,0,2}$& $\log \left( \varepsilon^2 \delta r \vert U\vert/200  \right)$
\end{tabular}
\end{center}
\end{corollary}

\subsection{Streaming Implementation}
When applying Lemma~\ref{lem:samplesim_general} to a dynamic streaming environment, we have to address a few problems. 
First, we may not know how to specify the number of items we are required to sample.
For Hamming and Rogers-Tanimoto similarities, it is already possible to run a black box LSH algorithm (such as the one by Cohen et al.~\cite{CDFGIMUY01}) if the number of sampled items are chosen via Corollary~\ref{cor:simsample}.
For Jaccard (and Anderberg), the sample sizes depend on the cardinality of $A$, which requires additional preprocessing steps.

\subsection*{Cardinality-Based Filtering}
As a first filter, we limit the candidate solutions based on their respective supports.
For each item, we maintain the cardinality, which can be done exactly in a dynamic stream via counting. If the sizes of two item sets $A$ and $B$ differ by a factor of at least $r_1$, \ie $\vert A \vert \geq r_1 \cdot \vert B \vert$, then the distance between these two sets has to be 
$$ 1-S(A,B) = \dfrac{\vert A \bigtriangleup B \vert}{\vert A \cup B \vert} \geq \dfrac{\vert A \vert - \vert B \vert}{\vert A \vert} \geq 1-1/r_1. $$
We then discard any item set with cardinality not in the range of $[r_1\cdot |A|, |A|]$.
Like the algorithm by Cohen et al~\cite{CDFGIMUY01}, we can do this by sorting the rows or hashing. 

\subsection*{Small Space Item Sampling}
Since the cardinality of an item set may increase and decrease as the stream is processed, we have to maintain multiple samples $U_k$ in parallel for various values of $k$. If a candidate $k$ is larger than the threshold given by Corollary~\ref{cor:simsample}, we will sample only few items and still meet a small space requirement.
If $k$ is too small, $|U_k|$ might be too large to store.
We circumvent this using a nested hashing approach we now describe in detail.

We first note that $U_k$ does not have to be a fully independent randomly chosen set of items. 
Instead, we only require that the events $X_i$ are pairwise independent.
The only parts of the analysis of Lemma~\ref{lem:samplesim_general} that could be affected are the bounds on the variances, which continue to hold for pairwise independence.
This allows us to emulate the sampling procedure using universal hashing.
Assume that $M$ is a power of $2$ and let $h:[|U|]\rightarrow [M]$ be a $2$-wise independent universal hash function, i.e. $\mathbb{P}[h(a)=j] = \frac{1}{M}$, for all $j\in [M]$.
We set $U_k = \lbrace j \in [M]~|~ \lsb(h(j)) = k \rbrace$, where $\lsb(x)$ denotes the first non-zero index of $x$ when $x$ is written in binary and $\lsb(0) = \log M$.
Since the image of $h$ is uniformly distributed on $[M]$, each bit of $h(j)$ is $1$ with probability $1/2$, and hence we have $\mathbb{P}[\lsb(h(j))=k] = 2^{-k}$.
Moreover, for any two $j,j'$ the events that $\lsb(h(j))=k$ and $\lsb(h(j'))=k$ are independent.
The value of $M$ may be adjusted for finer ($M$ large) or coarser ($M$ small) sampling probabilities. In our implementation (see Section~\ref{sec:experiments}) as well as in the proof of Theorem~\ref{thm:minhash_dynamic}, we set $M=|U|$.
 Following Dietzfelbinger~\cite{Die96}, $h$ requires $\log |U|$ bits of space.

To avoid storing the entire domain of $h$ in the case of large $|U_k|$, we pick, for each $k\in[0, \dots, \log |U|]$, another $2$-wise independent universal hash function $h_k:[|U|]\rightarrow [c^2]$, for some absolute constant $c$ to be specified later.
For some $j\in [|U|]$, we first check if $\lsb(h(j))=k$.
If this is true, we apply $h_k(j)$.

For the $i$th item set, we maintain a set $T_{k,\bullet}^i$ of buckets $T_{k,h_{k}(j)}^i$ for all $k\in \{0,\ldots \log |U|\}$ and $h_{k}(j)\in \{0,\ldots, c^2-1\}$. Each such bucket $T_{k,h_k(j)}^i$ contains the sum of the entries hashed to it. This allows us to maintain the contents of $T_{k,h_k(j)}^i$ under dynamic updates.
Note that to support similarity estimation for sets that might have a low cardinality at query time, we must also maintain a bucket set $T_{0,\bullet}^i$ associated to a hash function $h_0$, that will receive all items seen so far for a given set $i$, \ie each of them will be hashed to the bucket $T_{0,h_0(j)}^i$ with probability $2^0=1$ (see line 7 in Algorithm~\ref{alg:preprocess}).

For the interesting values of $k$, \ie $k\in \Theta(\log |A|)$, the number of indexes sampled by $h$ will not exceed some constant $c$. This means that the sampled indexes will be perfectly hashed by $h_k$, \ie the sum contained in $T_{k,h_k(j)}^i$ consists of exactly one item index.
If $k$ is too small (i.e. we sampled too many indexes), $h_k$ has the useful effect of compressing the used space, as $c^2$ counters require at most $O(c^2 \log |U|)$ bits of space.

We can then generate the fingerprint matrix, for instance, by performing a min-hash on the buckets $B_{k,\bullet}^i$ and storing the index of the first non zero bucket.
For a pseudocode of this approach, see Algorithm~\ref{alg:preprocess}. Algorithm~\ref{alg:filter} describes an example candidate generation as per Cohen et al.~\cite{CDFGIMUY01}.

\begin{algorithm}
\begin{algorithmic}[1]
\renewcommand{\algorithmicrequire}{\textbf{Input:}}
\renewcommand{\algorithmicensure}{\textbf{Output:}}
\REQUIRE{Parameter $c \in \mathbb{N}$}
\ENSURE{$T^{(i)}_{k,l}$ with $i \in [n], k \in [0,\ldots,\log m], l \in [c^2]$\\}
Initialization:\\
$s_i = 0$ for all $i \in [n]$\\$T^{(i)}_{k,l} = 0$ for all $i \in [n], k \in [0,\ldots,\log |U|], l \in [c^2]$.\\
$h: [|U|] \rightarrow [M]$ a $2$-universal hash function.\\
$h_1: [M] \rightarrow [c^2]$ another $2$-universal hash function.\\
\STATE {\textbf{On update} $(i,j,v)$:}\\
\begin{ALC@g}
\STATE $k = \lsb(h(j))$\\
\STATE $T^{(i)}_{k, h_1(j)} = T^{(i)}_{k, h_1(j)} + v$\\
\STATE $T^{(i)}_{0, h_1(j)} = T^{(i)}_{0, h_1(j)} + v$\\
\STATE $s_i = s_i + v$
\end{ALC@g}
\end{algorithmic}
\caption{Dynamic stream update (Jaccard)}\label{alg:preprocess}
\end{algorithm}

\begin{algorithm}
\begin{algorithmic}[1]
\renewcommand{\algorithmicrequire}{\textbf{Input:}}
\renewcommand{\algorithmicensure}{\textbf{Output:}}
\REQUIRE{Thresholds $0 < r_1,\alpha < 1$, $T^{(i)}_{k,l}$ from Alg.1 with $k \in \lbrace 0,1,2, \ldots, \log |U| \rbrace$}
\ENSURE{Set of candidate pairs\\}
Initialization:\\
$I = \lbrace 0,\log(1/r_1),2\log(1/r_1), \ldots, \log |U| \rbrace$\\
$H_i$: empty list for $i \in I$.\\
\FOR{$i \in [n]$}
\STATE $s = \ell_0(x^{(i)})$\\
\FOR{$k \in [\log(r_1^2 \cdot \alpha \cdot s), \log(\alpha \cdot s)] \cap I$}
\STATE add $(i,MinHash(T^{(i)}_{k,\bullet}))$ to $H_{k}$

\ENDFOR
\ENDFOR

\RETURN $\lbrace (i,i') \mid \exists k: (i,h),(i',h') \in H_k \text{ and } h = h' \rbrace$
\end{algorithmic}
\caption{Filter candidates (Jaccard)}\label{alg:filter}
\end{algorithm}

\begin{proof}[Proof of Theorem~\ref{thm:minhash_dynamic}]
Fix items sets $A$ and $B$ and let $a,b$ be the corresponding characteristic vectors for the sets $A$ and $B$, respectively. Without loss of generality, assume $|A| \geq |B|$.
Set $\alpha = \frac{\varepsilon^2 \cdot \delta }{600}$.
If $S(A,B) \geq r_1$ then 
$|A|/ |B| \leq 1/r_1$, then $\log(\alpha \cdot|B|) \leq \log(\alpha \cdot|A|)$ and $\log(r_1 \cdot \alpha \cdot |B|) \leq \log(\alpha \cdot|A|)$. Both sets will then enter line 3 of  Algorithm~\ref{alg:filter} for some common values
of $k$, and must exist at least an $H_k$ containing min-hashes from both sets as per line 4.

Let $2^k$ be the largest power of $2$ such that $k\leq \log\left(\alpha\cdot r|A\cup B|\right)\leq \log\left(\alpha\cdot |A\cap B|\right)$. 
Let $U_k$ be a subset of indexes as determined by line 4 of Algorithm~\ref{alg:preprocess} and define $A_k:=U_k\cap A$ and $B_k:= U_k\cap B$.
In expectation, $\mathbb{E}[|A_k\cup B_k|] = |A\cup B|/2^k$.
By Markov's inequality, we have $|A_k\cup B_k| \leq \frac{3}{\delta}\cdot |A\cup B|/2^k \leq \frac{1800}{\varepsilon^2\delta^2\cdot r_1}$ with probability at least $1-\delta/3$. 
By setting the number of buckets in the order of 
\begin{equation}
\label{eq:cbound}
c^2=|A_k\cup B_k|^2 \in O\left(\frac{1}{\varepsilon^4\delta^5\cdot r_1^2}\right),
\end{equation} the elements of $A_k\cup B_k$ will be perfectly hashed by $h_k$ with probability at least $1-\delta/3$ (line 3 of Algorithm~\ref{alg:preprocess}).
Since deleting indexes where both vector entries are zero does not change the Jaccard similarity, the probability that the smallest index in the collection of buckets $T^{(p)}_{k,\bullet}$ is equal to the smallest index in the collection of buckets $T^{(q)}_{k,\bullet}$ is equal to the similarity of $A_k$ and $B_k$. 
Thus we have
$$\mathbb{P}[MinHash(T^{(p)}_{k,\bullet})= MinHash(T^{(q)}_{k,\bullet})]=S(A_k, B_k).$$
If $S(A,B)\geq r_1$ we have by our choice of $\alpha$ and due to the first part of Lemma~\ref{lem:samplesim_general}, $S(A_k,B_k)\geq (1-\varepsilon)\cdot S(A,B)$ with probability $1-\frac{\delta}{3}$.
If $S(A,B)\leq r_2< r_1$, we have due to the second part of Lemma~\ref{lem:samplesim_general} $S(A_k, B_k)\leq \dfrac{6\cdot S(A,B)}{\delta(1-(\varepsilon/5) \cdot \sqrt{2r_1})} \leq \frac{6r_2}{\delta(1-(\varepsilon/5) \cdot \sqrt{2r_1})}$ with probability $1-\frac{\delta}{3}$.

Conditioning on all events gives us a $(r_1,r_2,(1-\varepsilon)r_1,6r_2 /(\delta(1-\varepsilon/5 \sqrt{2r_1}))$-sensitive LSH with probability $1-\delta$.

To bound the space requirement, observe that for each of the $n$ item sets we have $\log |U|$ collections $T^{(p)}_{k,\bullet}$ of $c^2\in O\left(\frac{1}{\varepsilon^4\delta^5\cdot r_1^2}\right)$ buckets due to Equation~\ref{eq:cbound}. Each bucket contains a sum that uses at most $\log |U|$ bits.
The space required for each hash function is at most $\log |U|$ due to Dietzfelbinger~\cite{Die96}.
\end{proof}



The parameters in Theorem~\ref{thm:minhash_dynamic} can be chosen such that we are able to use Algorithm \ref{alg:preprocess} and Algorithm \ref{alg:filter} similar to the min-hashing technique in the non-dynamic scenario.
This also means that we can use similar tricks to amplify the probability of selecting high similar items in Algorithm \ref{alg:filter} and lower the probability in case of a small similarity as long as $(1-\varepsilon)r_1>\frac{6r_2}{\delta(1-\varepsilon/5)\sqrt{r_1}}$, see also Leskovec et al.~\cite{LRU14}. 
Let $\ell,m \in \mathbb{N}$. Then we repeat the hashing part of Algorithm~\ref{alg:filter} $\ell$ times and only add a pair to the output set iff all $\ell$ hash values are equal. This procedure is repeated $m$ times and the final output set contains all pairs which appear at least once in an output set of the $m$ repetitions. The probability that a pair with similarity $s$ is in the output set is $1-(1-p^\ell)^m$ with $p \geq (1-\delta)(1 - \varepsilon)s$ if $s > r_1$ and $p \leq 6 s/(\delta(1-\varepsilon/5 \sqrt{r_2})$ if otherwise.

\section{Experimental Evaluation}
\label{sec:experiments}
In this section we evaluate the practical performance of the algorithm given in Section~\ref{sec:filter}.
Our aim is two-fold: First, we want to show that the running time of our algorithm is competitive with more conventional min-hashing algorithms. 
For our use-case, \ie dynamic streams, we are not aware of any competitors in literature. 
Nevertheless, it is important to demonstrate the algorithm's viability, as in many cases a system might not even support a dynamic streaming environment: we show a performance comparison in Section~\ref{subsec:running_time}.
To cover all ranges of user profiles, we use a synthetic benchmark described below.

Our second goal is to be able to measure the quality of the algorithm's output. We deem our filtering mechanism to be successful if it finds most of the user pairs with high similarity, while performing a good level of filtering, returning as candidates few user pairs with low similarity. 
Furthermore, Theorem~\ref{thm:minhash_dynamic} guarantees us a reasonable approximation to the similarity of each pair, though it is unclear whether this still holds for all pairs simultaneously, especially for small bucket sizes.  We are satisfied if our approximate computation based on sketches does not lead to high deviation with respect to exact similarities. 
As a typical candidate from practice, we consider user profiles containing a users recently preferred artists from Last.FM. 

\noindent
{\bf Implementation details:} 
We implemented Algorithm~\ref{alg:preprocess}, Algorithm~\ref{alg:filter}, as well as other hash routines in C++ and compiled the code with GCC version 4.8.4 and optimization level 3.
Compared to the description of Algorithm~\ref{alg:filter}, which has 5 parameters (error $\varepsilon$, failure probability $\delta$, lower bound for desirable similarities $r_1$, upper bound for undesirable similarities $r_2$, and granularity of the sampling given by $M$), our implementation has only two parameters: (1) the inverse sampling rate $\alpha$ and (2) the number of buckets $c^2$. Recall that a higher inverse sampling rate $\alpha$ means selecting higher values of $k$ in Algorithm~\ref{alg:filter}, line 5, where an increasing $k$ is associated to a decreasing sampling probability $2^{-k}$ of a bucket $T_{k,h_k(i)}$.

The choice of $c^2$ influences the possible combinations of $\varepsilon$, $\delta$, and $r_1$, see Equation~\ref{eq:cbound} for an upper bound on $c^2$.
The cardinality based filtering of Algorithm~\ref{alg:filter} is influenced by the choice of $\alpha$. 


As a rule of thumb, $r_2$ is roughly of the order $r_1^{1.5}$.
For example, if we aim to retain all pairs of similarity at least $\frac{1}{4}$, we can filter out pairs with similarity less than $\frac{1}{8}$. Pairs with an intermediate similarity, i.e. a similarity within the interval $[\frac{1}{8},\frac{1}{4}]$, may or may not be detected. We view this as a minor restriction as it is rarely important for these thresholds to be sharp.

Lastly, we implemented Dietzfelbinger's multiply-add-shift method to generate $2$-wise independent hash functions, where $a$ is a random non-negative odd integer, $b$ a random non-negative integer, and for a given $M$ the shift is set to $w-\log(M)$, where $w$ is the word size (32 bits in our implementation).
All hash functions used in the implementation of both Algorithm~\ref{alg:preprocess}, that is the functions $h$, $h_1$ and the hash functions used for implementing the MinHash scheme, with amplification parameters $\ell$ (functions in one band) and $m$ (number of bands) at line 6 of Algorithm~\ref{alg:filter}), are  $2$-wise independent hash functions, and were generated independently, \ie we did not reuse them for subsequent experiments.  
	Otherwise the implementation follows that of Algorithms~\ref{alg:preprocess} and~\ref{alg:filter} with various choices of parameters.

All computations were performed on a 2.7 GHz Intel Core i7 machine with 8 MB shared L3 Cache and 16 GB main memory. 
Each run was repeated 10 times. 


\noindent
\textbf{Synthetic Dataset}  
To accurately measure the distortion on large datasets, for varying feature spaces,
we used the synthetic benchmark by Cohen et al.~\cite{CDFGIMUY01}. 
Here we are given a large binary data-matrix consisting of $10,000$ rows and either $10,000$, $100,000$ or $1,000,000$ columns. 
The rows corresponded to item sets and the columns to items, \ie we compared the similarities of rows.
Since large binary data sets encountered in practical applications are sparse, the number of non-zero entries of each row was between $1\%$ to $5\%$ chosen uniformly at random.
Further, for every $100$th row, we added an additional row with higher Jaccard similarity in the range of $\{ (0.35, 0.45), (0.45, 0.55), (0.55, 0.65),$ $ (0.65, 0.75), (0.75, 0.85),$ $(0.85, 0.95)\}$. 

To obtain such a pair, we copied the preceding row (which was again uniformly chosen at random) and uniformly at random flipped an appropriate number of bits, \eg for $10,000$ items, row sparsity of $5\%$, and similarity range $(0.45,$  $0.55)$ we deleted an item contained in row $i$ with probability $1/3$ and added a new item with probability $\frac{1}{19}\cdot \frac{1}{3}=\frac{1}{57}$.
In the insertion-only case, the stream consists of the sequence of $1$-entries of each row. We introduced deletions by randomly removing any non-zero entry immediately after insertion with probability $\frac{1}{10}$.

\noindent \textbf{Last.FM Dataset:} For an evaluation of our algorithm on real data we considered a dataset from~\cite{anagnostopoulos2015learning} 
containing temporal data from the popular online (social) music recommendation system Last.fm.
Users build a profile in multiple ways: listening to their personal music collection with a music player app with the Last.fm \emph{Audioscrobbler} plugin, or by listening to the Last.fm internet radio service, either with the official client app, or with the embedded player. Radio stations consist of uninterrupted audio streams of individual tracks based on the user's personal profile, its ``musical neighbors'' (\ie similar users identified by the platform), or the user's ``friends''. All songs played are added to a log from which personalized top charts and musical recommendations are calculated, using a collaborative filtering algorithm. This automatic track logging is called \emph{scrobbling}. 

Our dataset contains the full ``scrobbled'' listening history of a set of $44,154$ users, covering a period of 5-years from May 2009 to May 2014, containing 721M listening events and around 4.6M unique tracks, where each track is labeled with a score for a set of 700 music genres. To obtain a more granular feature space, we decided to map each track to the corresponding artist. To this end we queried the MusicBrainz DB
to obtain artist information for each of the unique tracks (total of $1.2M$ artists). We then processed the listening histories of each user $i$ in chronological order to produce our event stream, emitting a triple \texttt{(i,j,+1)} after a user has listened to at least 5 songs by an artist $j$, and emitting a triple \texttt{(i,j,-1)} when no track from artist $j$ is listened by $i$ for a period of 6 months (expiration threshold). 
The initial threshold is mainly intended to mitigate the ``recommendation effect'': being Last.fm a recommendation system, some portions of the listening histories might in fact be driven by recommendation sessions, where diverse artists are suggested by the system based on the user's interests (\ie not explicitely chosen by him), and are likely to lead to cascades of item losses after the expiration time. Like most real world datasets that link users to bought/adopted items, this dataset is very sparse. 
To overcome this problem, we considered only users having at least $0.5\%$-dense profiles on average, obtaining a final set of $n=15K$ users (sets), $|U|=380K$ (items) and a stream length of $6.2M$.
Table~\ref{tab:lastfm_dist} shows the distribution of exact similarity values for all pairs the Last.fm dataset.

\begin{table}[htbp]
\begin{center}
\begin{tabular}{c | c | c | c | c }
Similarity & 0.0 & 0.05 & 0.1 & 0.15  \\

Number of pairs & 60432710 & 37947485 & 12031795 & 1938117  \\
\hline
Similarity & 0.2 & 0.25 & 0.3 & 0.35  \\

Number of pairs & 164246 & 7855 & 266 & 13 \\
\hline
Similarity & 0.4 & 0.45 & 0.5 & $\geq$ 0.55 \\
Number of pairs & 10 & 1 & 3 & 0 
\end{tabular}
\caption{Distribution of exact similarity values for pairs in the Last.fm dataset}
\label{tab:lastfm_dist}
\end{center}
\end{table}

\subsection{Performance Evaluation}
\label{subsec:running_time}
We evaluated the running time of our algorithm using the synthetic dataset, to understand its performance with respect to various dataset sizes, in  two different scenarios, an \emph{insertion-only stream}, and a \emph{fully dynamic} stream, both obtained from our synthetic dataset.
As a comparative benchmark, we compare our approach with an online implementation of a ``vanilla'' LSH scheme (later \emph{Vanilla-MH}), where profile sketches are computed online using $2$-wise independent hash functions (that is also our signature scheme). 


\noindent We tested two versions of our algorithm. The first version henceforth called \textit{DynSymSearch Proactive} (or simply \textit{Proactive}) maintains a set of fingerprints online, with every update (that is, after line 7 of Alg.~\ref{alg:preprocess}, reflecting the most recent change from the stream).
The second version henceforth called \textit{DynSymSearch} maintains the sketches of Alg.~\ref{alg:preprocess} and computes fingerprints only at query time.


The choice of the first or the second implementation depends on the use case, with a trade-off between query responsiveness and additional space required for computing and storing the signatures of sets. 

Let us now focus on the algorithms that update signatures online.
When inserting item $j$ added to set $i$, both \textit{Proactive} and \emph{Vanilla-MH} behave in the same way. When an element is added, all hash-functions are evaluated on the new element, and updated in case such value is the new minimum. 
Let $k=lsb(h(j)$.
In case of deletions, both will have to recompute signatures, yet while \emph{Vanilla-MH} has to do so for the full user profile, \textit{Proactive} has to recompute signatures only for the two compressed bucket sets $T_{k,\bullet}^i$  and $T_{0,\bullet}^i$. 
A further optimization that we implemented in \textit{Proactive}, is the \emph{selective recomputation} of signatures in case of deletions. In case of deletions of an item $j$, we recompute a set of signatures for $T_{k}^{(i)}$ only if the bucket is \emph{sensitive}, \ie its corresponding set cardinality and similarity threshold are such that $k$ is the range specified by line 5 of Algorithm~\ref{alg:filter}. This allows to ignore many costly recomputations.

Now we can move on to comparing the three on the various settings.
We set $\ell=5$, $m=40$ as amplifying parameters for signatures of all algorithms, and further set $r_1=0.5$ for our two algorithms. The choice of $\ell$,$m$ is not extremely important: indeed for the sake of runtime comparison all algorithms should only share the same ``hashing-related'' overhead. 
Average running times of 10 independent realizations of each algorithm are plotted in Figure~\ref{fig:running_time} where we study the impact of the parameters.

The running time of our algorithms is influenced by their parameters to different extents. In particular, the number of buckets $c^2$ has impact on both our algorithms (especially for \textit{Proactive}) as it directly implies more hash function evaluation for fingerprints.

For the insertion-only stream (Figure~\ref{fig:running_time_insertions_only}),  we see that the performances of the three algorithms are somewhat comparable, which is expected, considering that \emph{Vanilla-MH} is to some extent naturally contained in both versions of our algorithm. 
In \textit{DynSymSearch} they are computed only at query time on the sensitive sketches, rendering it the fastest option for this scenario. 

When considering deletions, things change dramatically. As can be seen in Figure~\ref{fig:running_time_dynamic}, deletions represent a problem for both \emph{Vanilla-MH} and \textit{Proactive}: as the fingerprint computation is not reversible, after a deletion they must all be consistently recomputed. However, our algorithm is less affected: thanks to its compression and cardinality-based bucketing system, the updates are, to some extent, more local, as they impact only the sensitive buckets. We note that \textit{Proactive} has some values of $\alpha$ where the running time increases: these values allow for a wider range of buckets to become sensitive as long as the set cardinalities vary with the stream, implying more signature recomputations when each $k$ becomes queryable.
Overall, \textit{DynSymSearch} is consistently faster then the other two options, while the performance of \emph{Vanilla-MH} is very poor, taking from 12 to 100 times more time than \textit{Proactive}, for $d=1M$.
We also remark that in the implementation of \emph{Vanilla-MH}, we are forced to store the entire data set in order to deal with deletions, to be able to recompute the fingerprints. This requirement is indeed not feasible in many settings.
Furthermore, even for these comparatively sparse data sets, our algorithm has significant space savings.

\begin{figure}
 \centering
 \begin{subfigure}[]{\columnwidth}
 \centering
\includegraphics[width=\columnwidth]{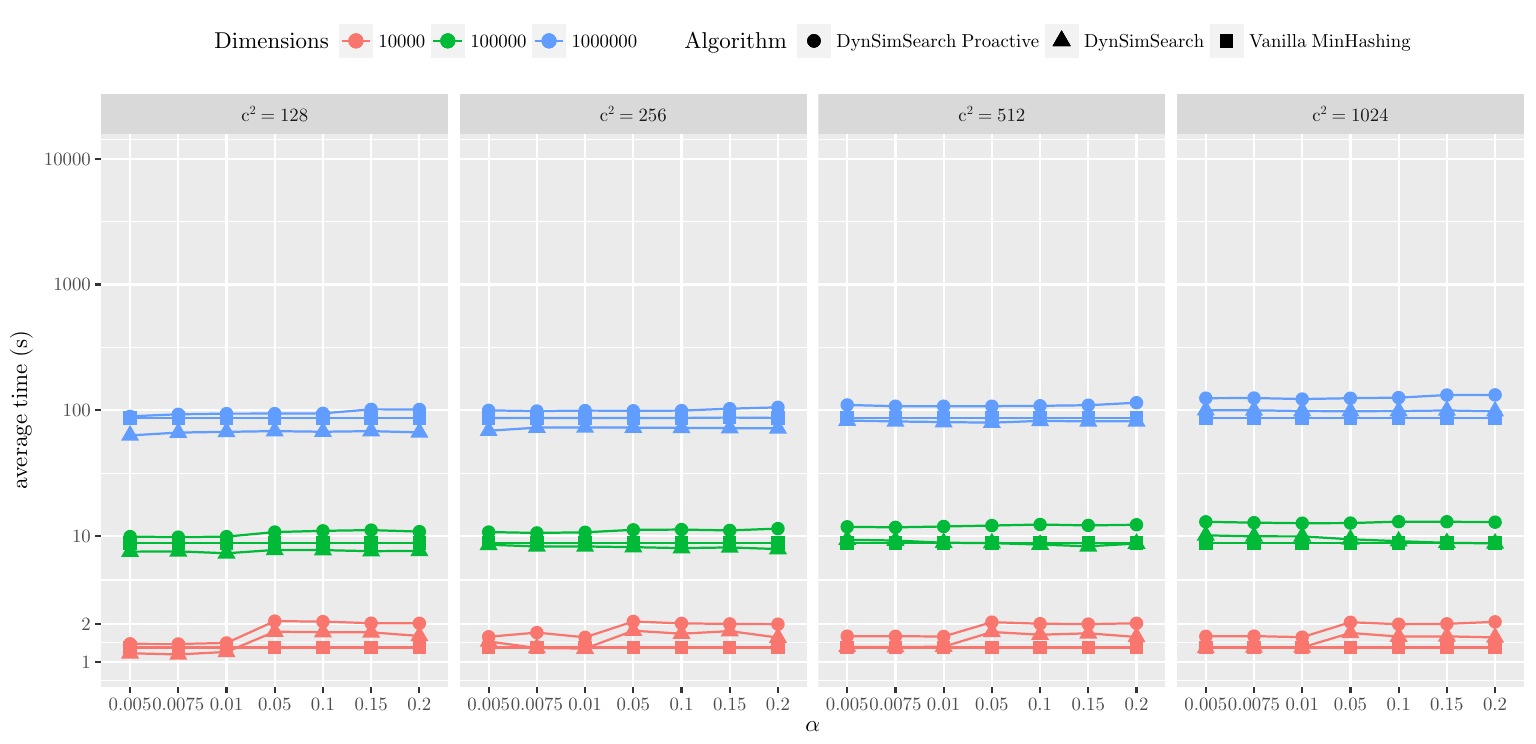}
 \caption{Insertion only stream}
 \label{fig:running_time_insertions_only}
   \end{subfigure}
    \vfill
     \begin{subfigure}[]{\columnwidth}
 \centering
\includegraphics[width=\columnwidth]{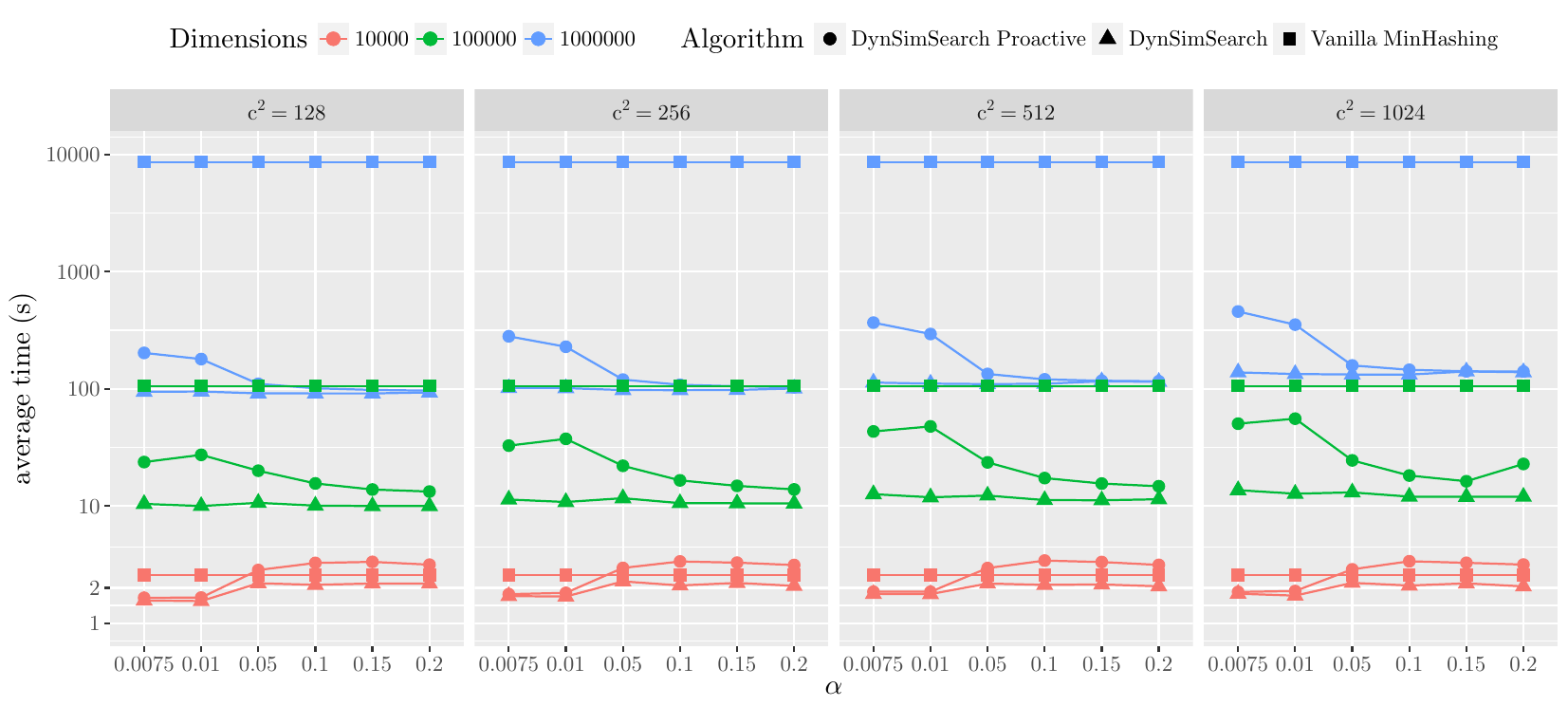}
 \caption{Fully dynamic stream}
 \label{fig:running_time_dynamic}
   \end{subfigure}
      \caption{[\textbf{Synthetic}] Running time of our algorithm compared to a 2-wise MinHash based LSH implementation, in insertion-only or fully dynamic setting for different values of $d$. $y$-axes are in log-scale. The summary running times are the mean values of 10 repetitions.}
\label{fig:running_time}
\end{figure}   

\textbf{Quality of approximation:}
We now move to examine the quality of approximation of our algorithm (which is the same for both implementations), on the synthetic dataset, as a function of our two main parameters, $\alpha$ and $c^2$.   
Concerning $\alpha$, there are two opposite cases. If the inverse sampling rate is too low, we might have chosen set representative buckets with many samples: this means high chance of collisions which decreases the approximation ratio. On the other hand if it is too high, the selected set of items might not be sensitive.
A higher bucket size instead, always means less collisions, for an increased space occupation of the sketches.

Figure~\ref{fig:deviation_synthetic} shows the values of the average squared deviation of the sketched similarities obtained with our algorithm, and their exact Jaccard similarity, on the synthetic dataset, for different value of $d$, and various values of the parameters $\alpha$ and $c^2$.

The goodness of a given $\alpha$ depends on the similarity of a pair in question.
We show separate plots for high and low similarity pairs, that is pairs with Jaccard similarity respectively below and above $0.2$. Their behavior is affected in a different way.
First, low pairs tend to have higher average squared deviation than high pairs, this is expected as out sketches can better approximate high similarity pairs.
Also, for both kind of pairs the distortion decreases with increasing $c^2$, independently of $\alpha$ as the number of collisions decrease monotonically. All deviations reach almost zero already at $\alpha=0.05$ for all bucket sizes. For $\alpha$  above $0.1$ we see that the deviation of high similarity pairs depart from the others, and especially for higher dimensional datasets tend to be slightly more distorted.
Except for the lowest number of buckets, the average total deviation for these parameters was always below $0.1$ and further decreased reaching to zero for larger bucket sizes. 
We note that these values of $c^2$ are below the theoretical bounds of Theorem~\ref{thm:minhash_dynamic}, while having little to acceptable deviation for appropriately chosen values of $\alpha$.

\begin{figure}
 \centering
 \begin{subfigure}[]{\columnwidth}
 \centering
\includegraphics[width=\columnwidth]{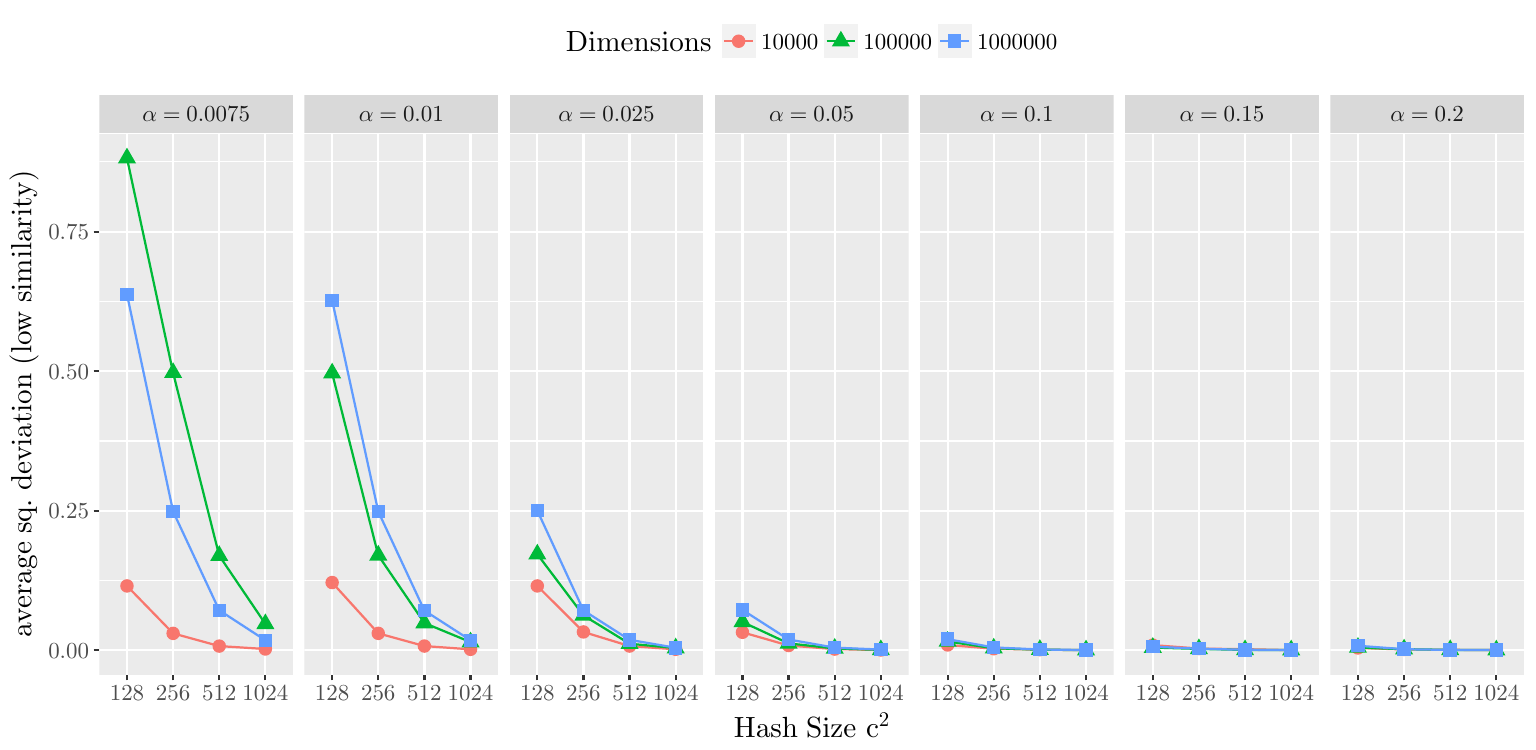}
 \caption{Low similarity pairs}
   \end{subfigure}
    \vfill
     \begin{subfigure}[]{\columnwidth}
 \centering
\includegraphics[width=\columnwidth]{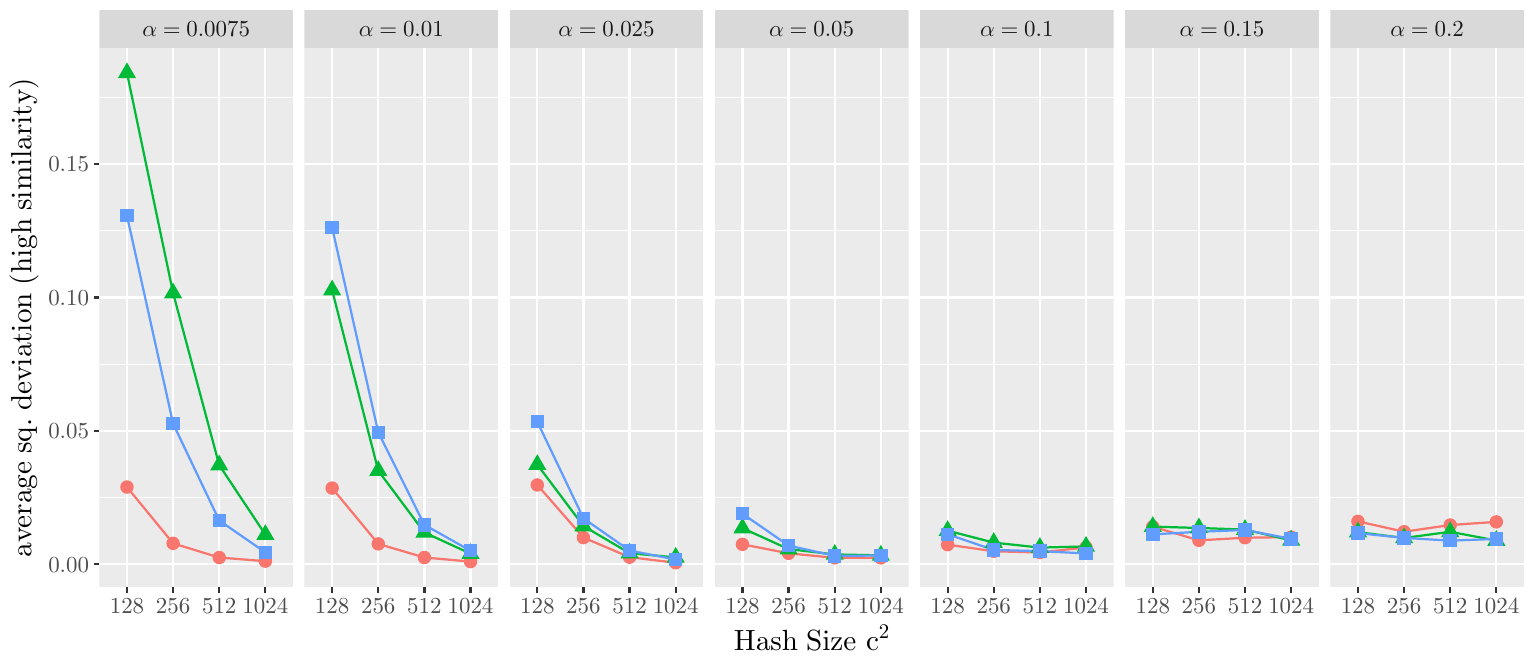}
\caption{High similarity pairs}
   \end{subfigure}
      \caption{[\textbf{Synthetic}] Average squared deviation for high similarity ($J \geq 0.2$) and low similarity ($J < 0.2$) pairs in the synthetic dataset, for various parameter choices}
\label{fig:deviation_synthetic}
\end{figure}

\subsection{Analysis of the Last.fm Dataset}
\label{sec:analysis_lastfm}
A realistic context like the one of Last.FM dataset, offers a valid playground to explore the performance of our similarity search. 
We use the locality sensitive hashing approaches as a recommendation device. 
We also compute a visualization of the most related user pairs, which illustrates an application of the  sketching techniques from Section~\ref{sec:est} to implicitely store an approximate distance matrix in small space.
We note that the data is very sparse. Since the  Jaccard index is highly sensitive to the support of the vectors, using it for this type of recommendation is more appropriate compared to other similarity measures such as Hamming, or cosine similarity. 


In Algorithm 2, we fixed $r_1 = 0.25$, therefore for each set $A$ added a min-hash value to $H_k$ with $k \in  [s - 2, s] \cap I$, with $I=\{0, log_2\frac{1}{r_1}, \dots\}= \{0, 2, 4, \dots\ , log\rvert U\rvert \}$, where $s$ is the actual cardinality of $A$ at the time of the query (which we perform at the end of the stream). 
At the output of the filtering phase, we evaluate the similarity between users of a candidate pair using $k = \log(\alpha \cdot r_1 \cdot \max(\rvert A \lvert,\rvert B \lvert)$ and output $S(A_k, B_k)$. Note that this choice of $k$ satisfies the first condition from Lemma~\ref{lem:samplesim_general}. 
Note that this dataset, as witnessed by the huge presence of very low similarity pairs (see Table~\ref{tab:lastfm_dist}), and very few pairs with higher similarity, is a challenge for any LSH scheme, as providing a good filtering behavior with low similarity thresholds requires many hash functions.

We performed multiple experiments in order to choose good parameters of $\ell$ and $m$ to achieve a good filtering. We set a threshold on the maximum number of hash functions to use to 1600 hash functions. Then we also set a threshold on the maximum fraction of pairs that we accept to be reported as candidate pairs, to $10\%$. Then we tested a number of combinations of $\ell$ and $m$ that are compatible with the similarity threshold $r$ and meet our constraints, and report them in Table~\ref{tab:ab_choices}.
The combination of $\ell=5$ and $m=300$ shows the lowest number of false negatives, and achieves a very good filtering, reporting only as low as 3.6\% of pairs. We choose these values as amplification parameters for the filtering phase, and are fixed for all the experiments on this dataset.

\begin{table}[ht!]
  \centering
  \begin{tabular} {| r | r | r | r |}
    
    \hline  
    \textbf{$\ell$} & \textbf{$m$} & \textbf{\% candidate pairs} & \textbf{False negatives}\\
    \hline
    \multirow{4}{*}{}
 4 & 400 & 0.099	&1393 \\ 
 \textbf{5} & 50 & 0.016 & 5171 \\ 
   & 150 & 0.028 &2406  \\ 
  & \textbf{300} & \textbf{0.036} & \textbf{781}  \\
  & 320 & 0.029	& 1587  \\
 6 & 200 & 0.0187 & 5415  \\
\hline    
  \end{tabular}

\caption{Fraction of pairs reported as candidates vs best number of false negatives given by our algorithm for various choices of $a$ and $b$}
\label{tab:ab_choices}
\end{table}

Figure~\ref{fig:lastfm_deviation} shows average squared deviation values of the sketched similiarities obtained with our algorithm and their exact Jaccard similarity, as function of $\alpha$ and $c^2$.
Like in Figure~\ref{fig:deviation_synthetic}, we show separate curves for pairs with Jaccard similarity below $0.2$ (green curve) and high pairs (red curve).
The same considerations made for the synthetic dataset hold, while we note that, for this dataset, the approximations of high similar pairs for very low bucket sizes appear slightly worse, possibly because indeed the majority of them have a similarity value is closer to the threshold, with respect to the synthetic dataset. However, for appropriate values of the parameters, all deviations tend to zero. 
Figure~\label{fig:lastfm_ac_rec_output} shows other information regarding the detection performance of our filtering scheme.
Recall that the sensitivity of our scheme is defining using Indyk and Motwani~\cite{InM98} kind of sensitivity, that is characterized by two different thresholds $r_2 < r_1$ (and corresponding regimes, with different approximation bounds as per Theorem~\ref{thm:minhash_dynamic}). As a rule of thumb, $r_2$ is roughly of the order $r_1^{1.5}$, so we tolerate to report pairs with similarity above $r_2=0.125$ , and consider this range as \emph{true positives} (\textbf{TP}), \emph{true negatives} (\textbf{TN}) pairs below $r_2$ that are correctly not reported. Conversely, pairs below $r_2$ that are reported as candidates by our algorithm are \emph{false positives} (\textbf{FP}), and we consider \emph{false negatives} (\textbf{FN}) pairs that are above the real threshold $r_1=0.25$ but were not reported. Figure~\ref{fig:lastfm_ac_rec_output} shows values of $Accuracy=\frac{(\textbf{TP+TN})}{(\textbf{TP+TN+FP+FN})}$, $Recall=\frac{(\textbf{TP})}{(\textbf{TP+FN})}$ and fraction of candidate pairs reported. We can notice that the recall is approximately $1$ for all values of the parameters. Accuracy instead, increases for increasing $c^2$, as expected, and also for increasing $\alpha$, until it deteriorates for very high values, like it was for the high similarity pairs in Figure~\ref{fig:lastfm_deviation}. We notice that we get filtering above 90\% starting from $\alpha=0.075$, for 512 buckets.
Lastly, in Figure~\ref{fig:lastfm_running_time} we see that we achieve very small running times for from $\alpha=0.075$, as a consequence of the filtering. 
We remark that these plots show the performances of the filtering algorithm alone without any further pruning step. Yet, as reported by our very low deviation from actual similarities, we note that when completely avoiding false negatives is of primary concern, one can decide to choose a lower $r_1$ (and/or a different $a, b$ combination) to retain more pairs in the candidate selection phase, and then perform another linear filtering using the accurate estimation given by our sketches.

\begin{figure}[t]
 \centering
 \begin{subfigure}[]{\columnwidth}
 \centering
\includegraphics[width=\columnwidth]{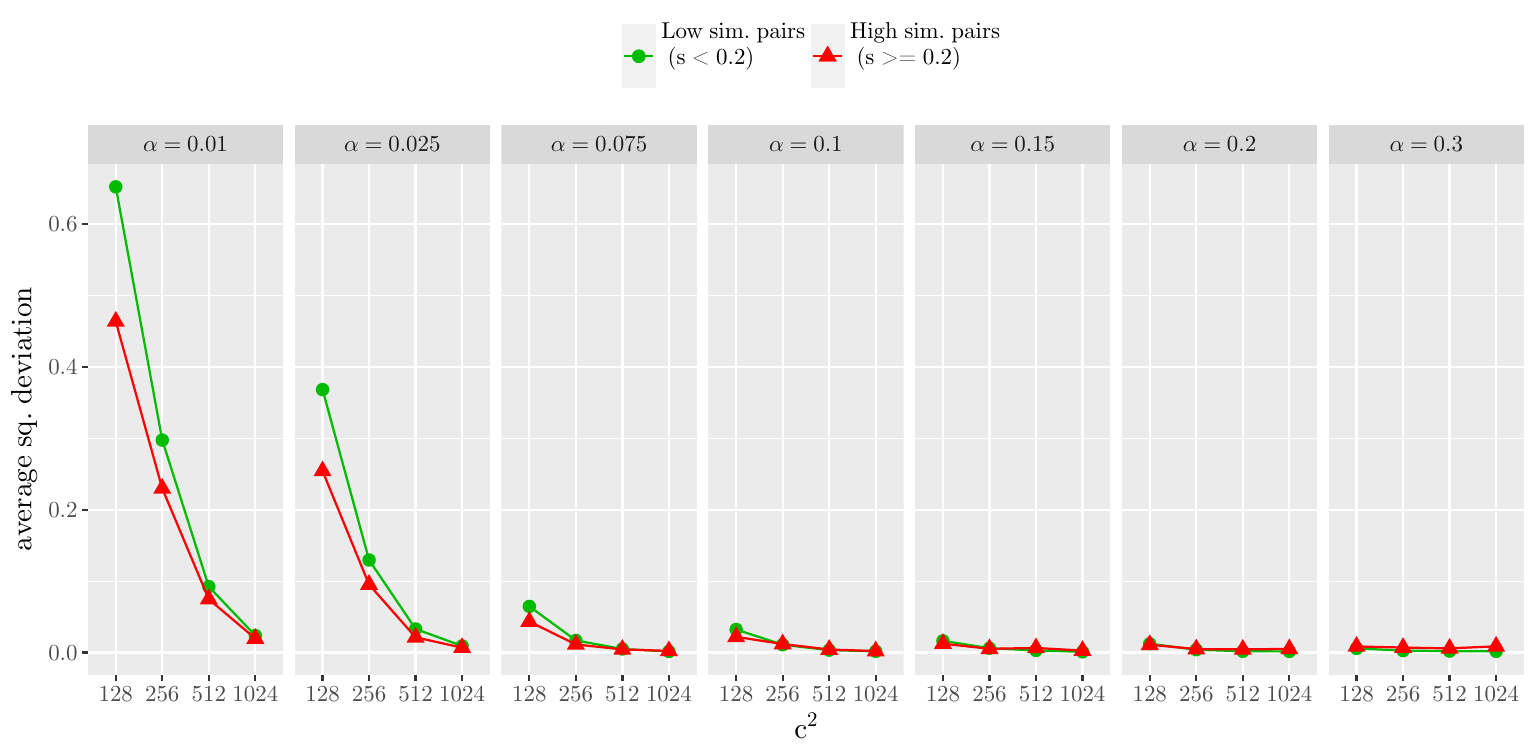}
 \caption{Average squared deviation}
\label{fig:lastfm_deviation}
   \end{subfigure}
    \vfill
     \begin{subfigure}[]{\columnwidth}
 \centering
\includegraphics[width=\columnwidth]{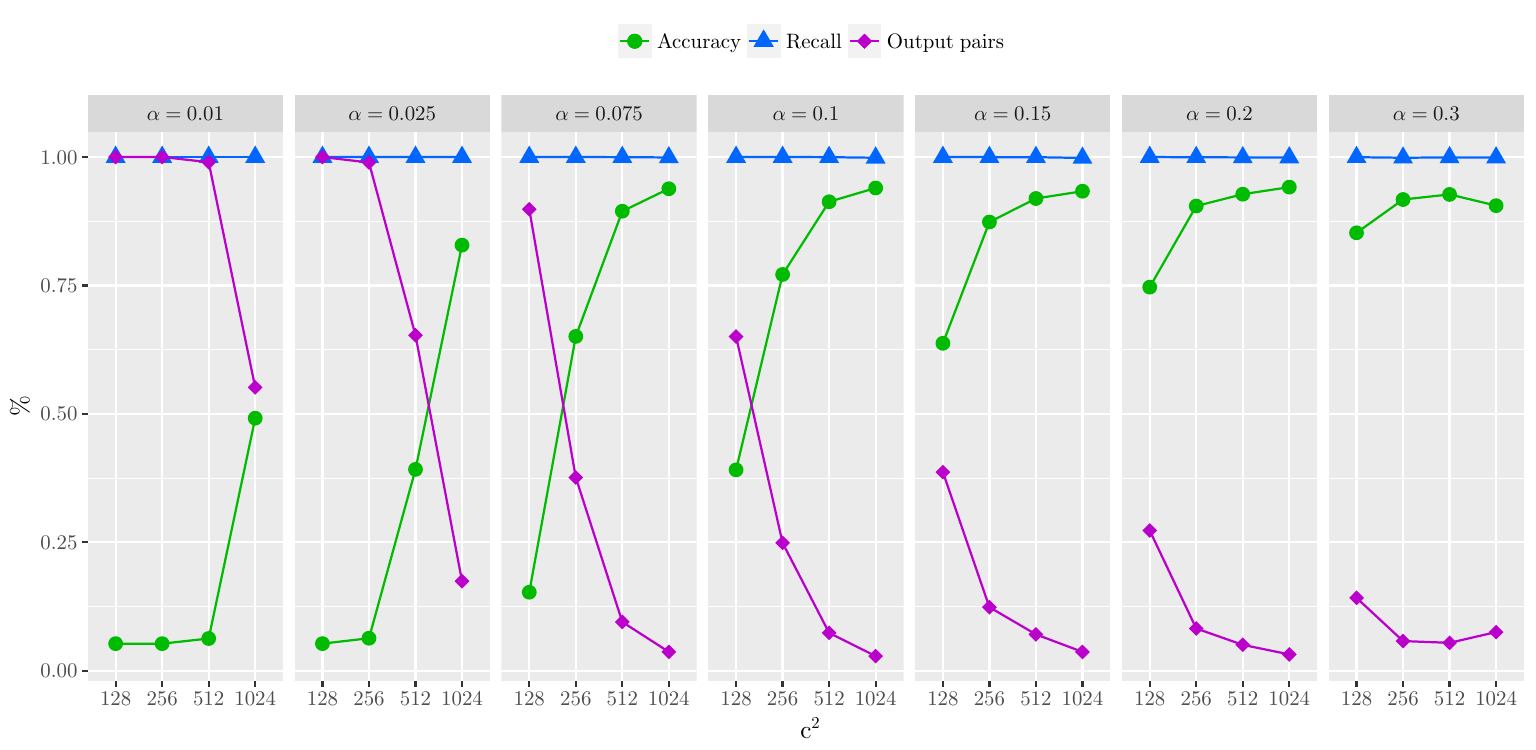}
 \caption{Accuracy, Recall and Fraction of output pairs}
 \label{fig:lastfm_ac_rec_output}
   \end{subfigure}
   \vfill
   \begin{subfigure}[]{\columnwidth}
 \centering
\includegraphics[width=\columnwidth]{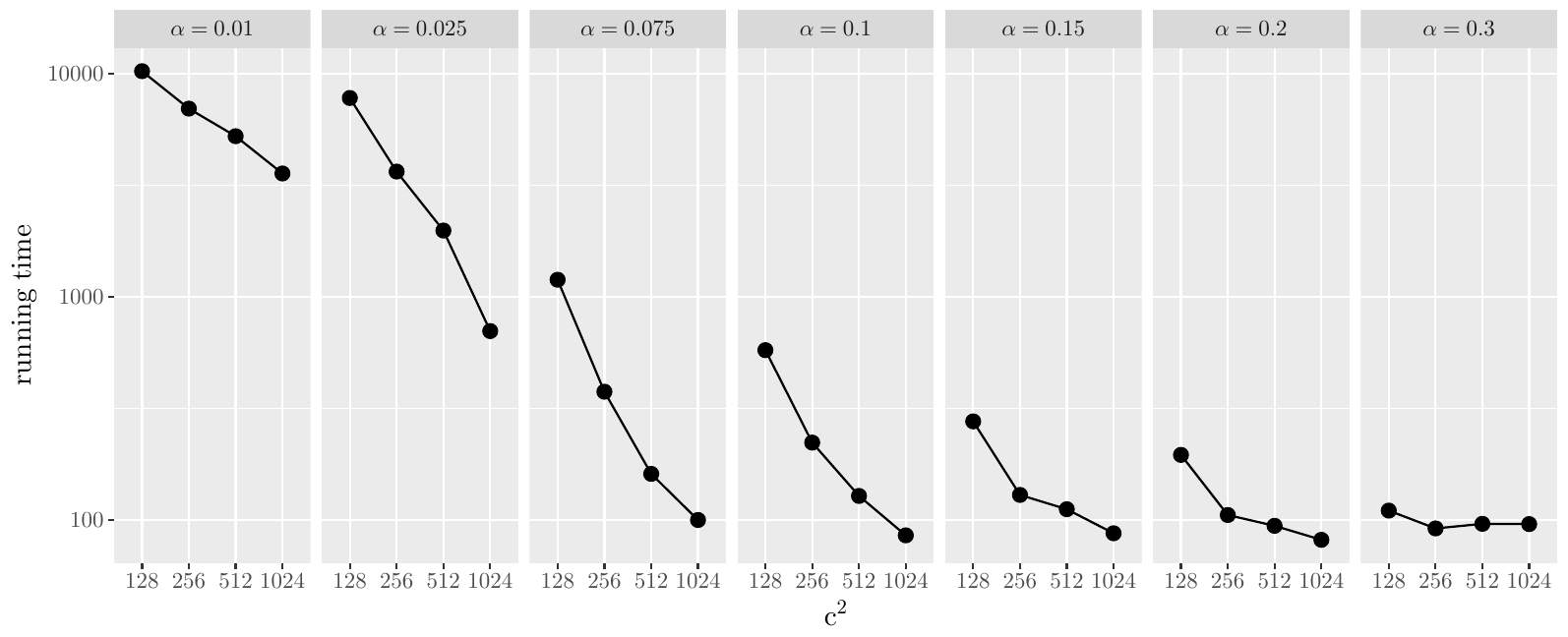}
 \caption{Running time (log-scale)}
 \label{fig:lastfm_running_time}
   \end{subfigure}

      \caption{[\textbf{Last.fm}] Approximation quality, Accuracy, Recall, Fraction of pairs found and running time, for various combinations of the parameters}
  
\end{figure}

\subsubsection{Visualizing top similar users}
\label{sec:visualizing}
We conclude showing a visualization of the most similar Last.fm users found by \emph{DynSymSearch}.
For a predefined order of the elements in $U$, that is, our collection of music artists, we can view user profiles as their characteristic binary vectors, where an entry is $1$ at a given time if a given user has recently listened to the corresponding artist. 
Given the high dimensionality of $U$, it is very hard to find a way to make sense of such similarities. 
We have taken various steps to reach the following two objectives: i) find a lower dimensional representation (ideally 2D points) of the user profiles that can mostly retain their Jaccard similarities, and ii) enrich such points with lower resolution information that helps to visually distinguish similar pairs without recurring to artist annotation.

\begin{figure}

\centering
\includegraphics[width=\columnwidth]{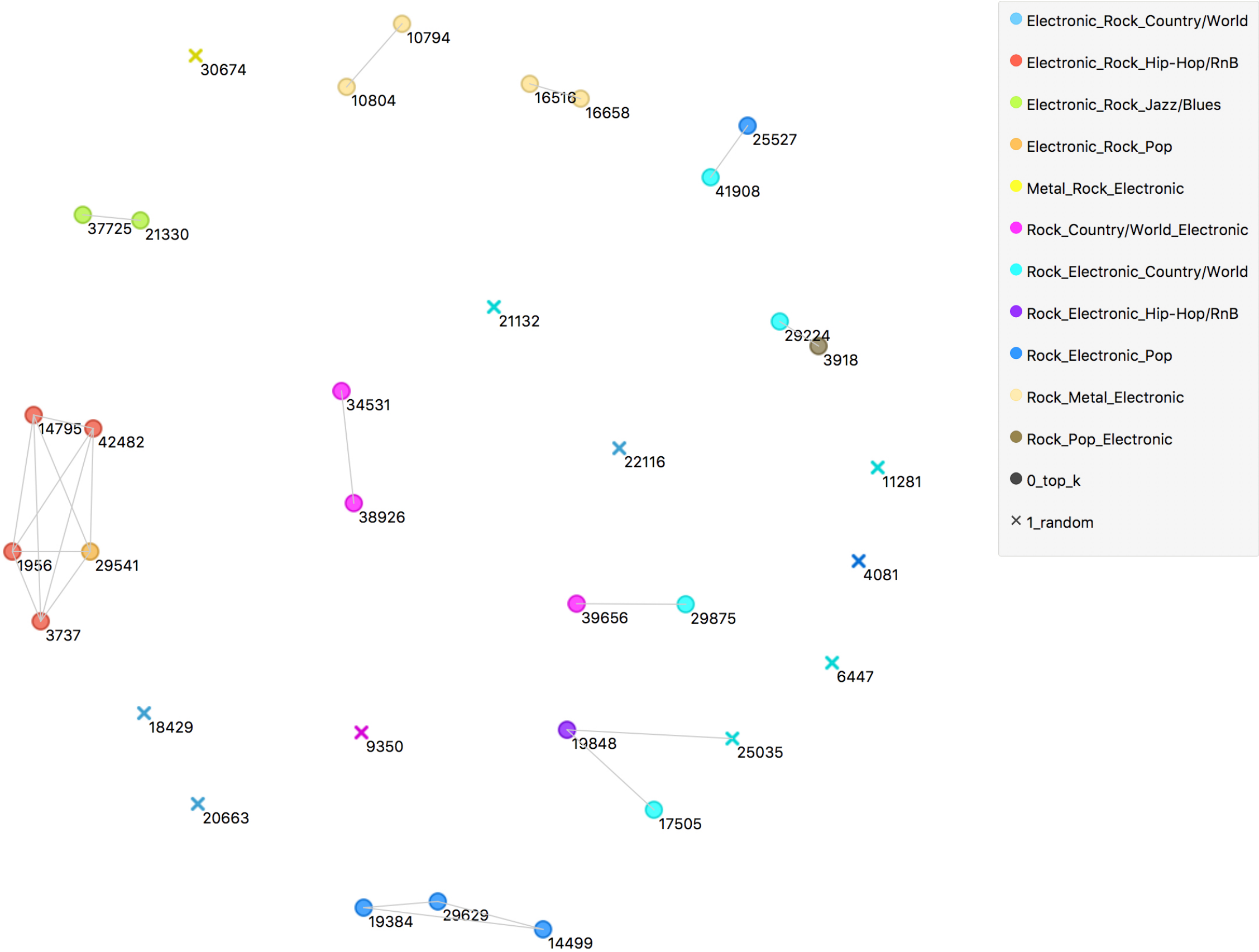}
        \caption{$2D$ embedding of the profiles of 30 users (\emph{top} and \emph{random}), obtained running a MDS algorithm on the characteristic vectors. Circles are used to represent users belonging to top similar pairs ($\SIM \geq 0.4$), and other 10 users were selected at random, and are marked with a cross symbol. Colors represent combinations of the top 3 user genres scores (see Section~\ref{sec:visualizing} for further details).}
        \label{fig:jac_mds}
\end{figure}

Our input is a set of characteristic vectors, representing profiles of a set $S$ of 34 users, 24 of which form the top 14 similar pairs \ie pairs with similarity above 0.4,see Table~\ref{tab:lastfm_dist}, and other 10 users selected at random. We refer to the former users as the \emph{top k} users, and call the latter users \emph{random}.

For implementing step i) we resort to Multidimensional Scaling 
(MDS)~\cite{wickelmaier2003introduction}, a technique that takes in input a matrix of pairwise distances (notably Euclidean and Jaccard, among others) of an input set of objects, and finds an $N$-dimensional projection of points, such that the between-object distances are preserved as well as possible. Each object is then assigned coordinates in each of the $N$ dimensions. 
We used a Python implementation of MDS from the Orange Data Mining library~\cite{JMLR:demsar13a}, where we set $N=2$ and input a Jaccard distance matrix computed on all pairs of our 34 user characteristic vectors.

As per step ii) we used genre information from the original dataset in the form of a vector of scores for the music genres \emph{Rock, Pop, Electronic,Metal,Hip-Hop/RnB, Jazz/Blues} and \emph{Country/World}, where each entry is normalized so that their sum adds to 1.  For each artist $a$ appearing in some user profile, we derived a score vector computing the normalized sum of all score vectors of tracks authored by him, present in our dataset, and then in turn used the same mechanism for deriving a score for a user listening to a set of artists, determining a $7$-dimensional vector, or \emph{genre-based profile} for each user. 

Figure~\ref{fig:jac_mds} depicts a result of the combinations of both steps: a 2-dimensional MDS visualization of $S$ computed using artist-based Jaccard similarity, annotated with colors reflecting the first 3 entries by score as per the corresponding users genres-based profiles. Also, edges show pairs for which the Jaccard similarity is above the threshold 0.1 (note that more than 14 edges are reported, as some users are involved in mildly-similar pairs with other users from \emph{top k}, yet with similarity lower than 0.4). We notice that the majority of users have \emph{Rock} or \emph{Electronic} among their main genres, this is a characteristic of the dataset.

Overall, a clustered structure becomes apparent when considering both distance and genre-based colors, (also, we see that \emph{random} users --- marked with a cross symbol --- are mostly spread out and are not involved in any pairs). Some pairs of \emph{top} users have different colors: this possibly means that their intersection involves a subset of such genres, which is quite natural.

\begin{figure}[ht]
 \centering
 \begin{subfigure}[]{\columnwidth}
 \centering
\includegraphics[width=0.7\columnwidth]{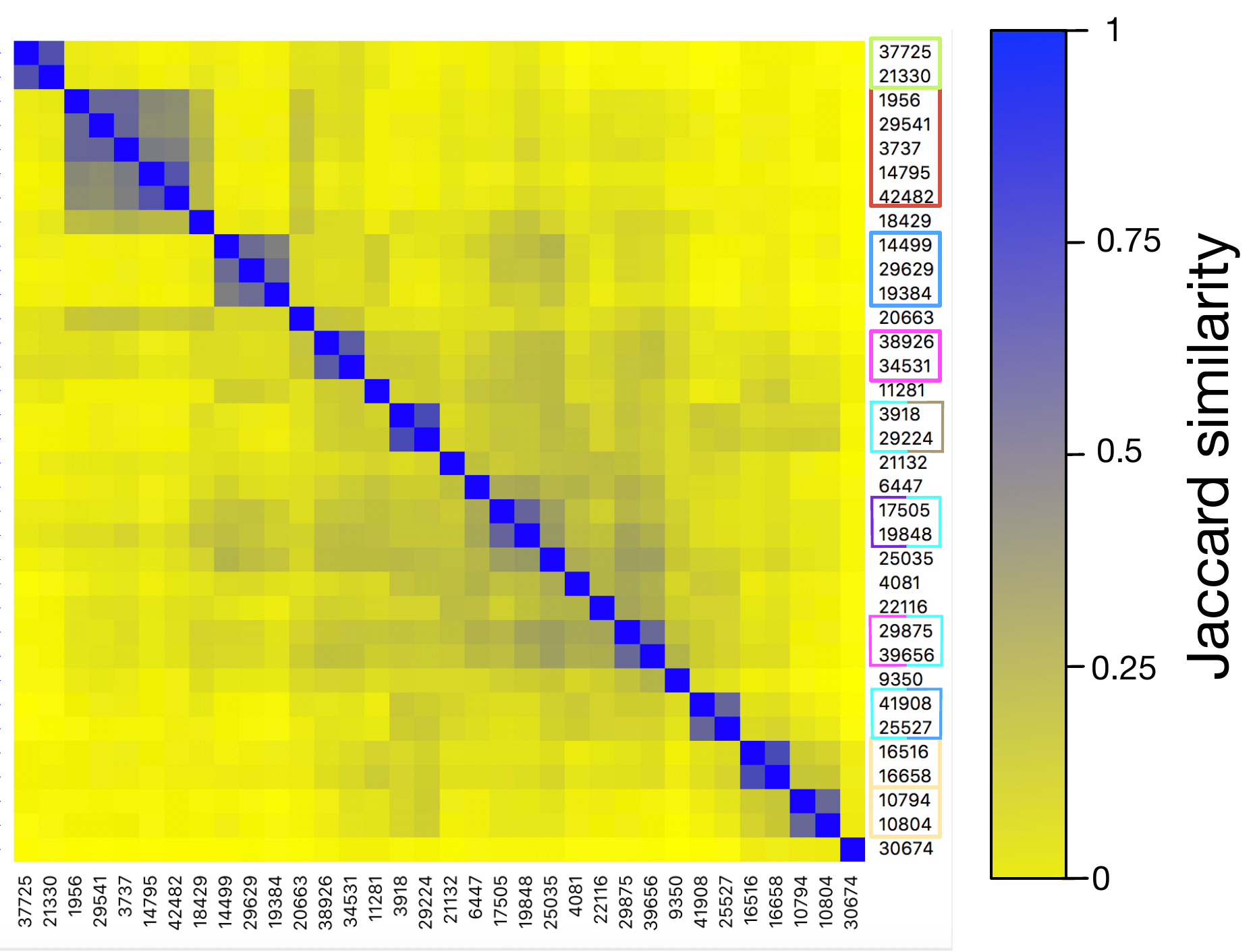}
 \caption{Clustered Jaccard similarity matrix (users profiles)}
\label{fig:clusters_jaccard}
   \end{subfigure}
    \hfill
     \begin{subfigure}[]{\columnwidth}
 \centering
\includegraphics[width=0.7\columnwidth]{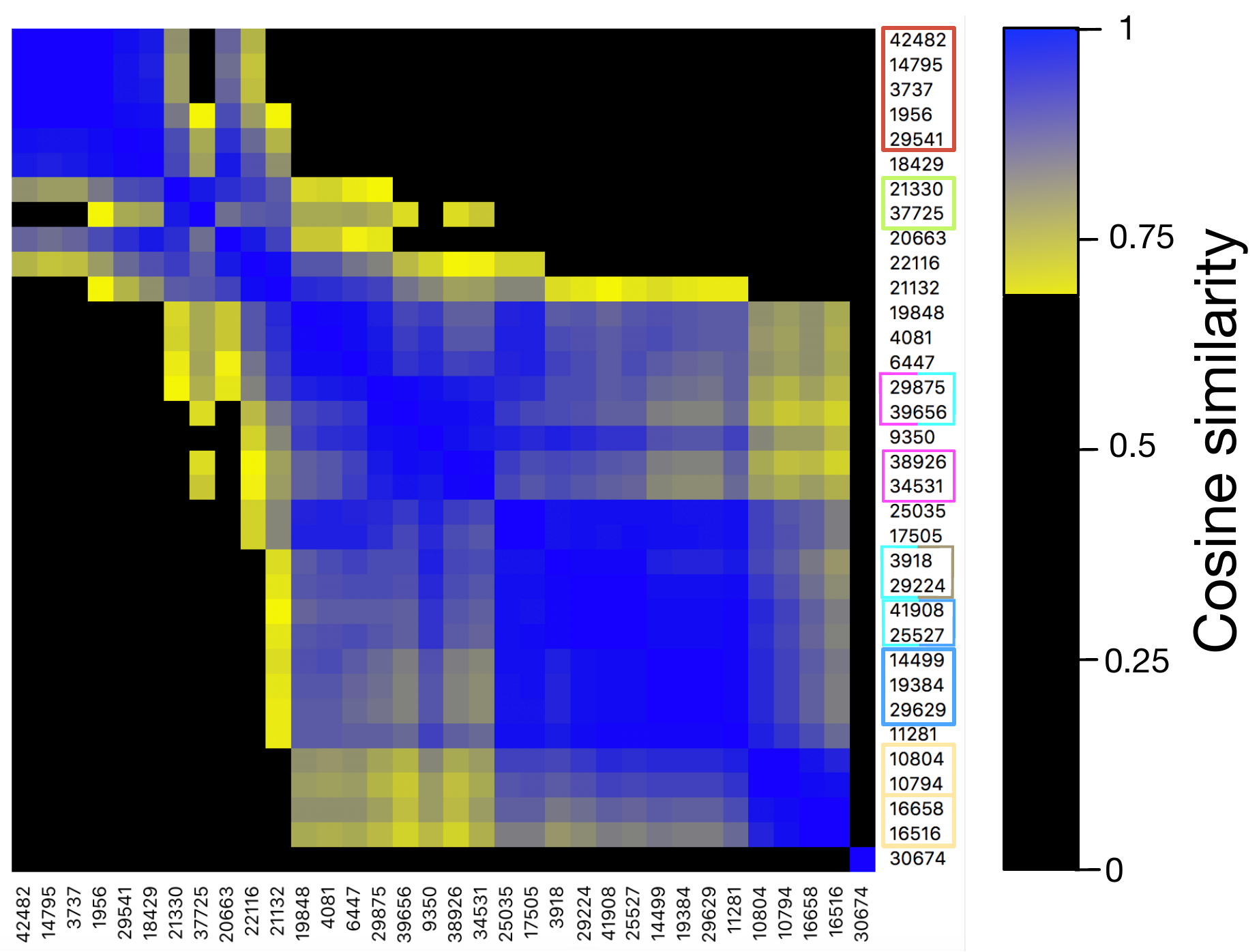}
 \caption{Clustered cosine similarity matrix (genres)}
 \label{fig:clusters_cosine}
   \end{subfigure}
   
      \caption{Clustered similarity matrices for top and random users of Figure~\ref{fig:jac_mds}. Plots show heatmap output of a hierarchical clustering algorithm using a) Jaccard distance between users profiles (listened Last.fm artists), b) Cosine distance genres-based profile. In both figures the colored boxes highlight the clusters present in Figure~\ref{fig:jac_mds}. User ids are automatically arranged by the clustering algorithm so as to maximize the sum of similarities of adjacent elements.}
      \label{fig:clustered_similarity_matrices}
\end{figure} 

To complement Figure~\ref{fig:jac_mds}, in Figure~\ref{fig:clustered_similarity_matrices}, we show heatmaps of two similarity matrices, Jaccard similarities of artist-based profiles  (Fig.~\ref{fig:clusters_jaccard}) and Cosine similarities of genre-based profiles (which span the range between 0 and 1, being vectors with only positive components), in Figure~\ref{fig:clusters_cosine}, arranged using the output of a hierarchical clustering algorithm with \emph{ordered leaves} representation~\cite{bar2001fast}, \ie maximizing the sum of similarities among adjacent elements. We see that the clustering structure is apparent, and preserved, in both matrices (see the colored boxes on the user ids), although way clearer in the Jaccard matrix. This is expected as artists-based profiles have a far more granular resolution, and are therefore sparser with respect to genre-based profiles, especially considering that the main genres are almost the same for all users. This is also a witness of the fact that artist information is more suitable to achieve real personalized recommendations than genres, which motivates our choice of artists as user profile features.

\section{Conclusion}
In this paper, we presented scalable approximation algorithms for Jaccard-based similarity search in dynamic data streams.
Specifically, we showed how to sketch the Jaccard similarity via a black box reduction to $\ell_0$ norm estimation, and we gave a locality sensitive hashing scheme that quickly filters out low-similarity pairs.
To the best of our knowledge, these are the first algorithms that can handle item deletions.
In addition to theoretical guarantees, we showed that the algorithm has competitive running times to the established min-hashing approaches.
We also have reason to believe that the algorithm can be successfully applied in real-world applications, as evidenced by its performance for finding Last.fm users with similar musical tastes. 

It would be interesting to extend these ideas for other similarity measures. Though we focused mainly on the Jaccard-index, our approach works for any set-based similarity measure supporting an LSH, compare the characterization of Chierichetti and Kumar~\cite{ChK15}.
It is unclear whether our techniques may be reused for other similarities applied in collaborative filtering.

\bibliographystyle{unsrt}
\bibliography{bibliography}
\end{document}